\newtheorem{theorem}{Theorem}
\newtheorem{proposition}{Proposition}
\newtheorem{definition}{Definition}
\begin{document}
%
\title{Joint Caching and Pricing Strategies for Popular Content in Information Centric Networks}
%
%
%

\author{Mohammad~Hajimirsadeghi,~\IEEEmembership{Student Member,~IEEE,}
        Narayan B.~Mandayam,~\IEEEmembership{Fellow,~IEEE,}
        and~Alex~Reznik,~\IEEEmembership{Senior Member,~IEEE}
\thanks{Manuscript received April 30, 2016; revised September 29, 2016 and December 14, 2016; accepted
January 23, 2017.}
\thanks{M. Hajimirsadeghi and N. B. Mandayam are with WINLAB, Rutgers University, Piscataway, NJ 08854-8060 USA (e-mail: narayan@winlab.rutgers.edu; mohammad@winlab.rutgers.edu).}
\thanks{A. Reznik is with Hewlett Packard Enterprise (HPE), New York 10011, USA (e-mail: alex.reznik@hpe.com ).}}

%
%

\pagestyle{plain}


%



\maketitle

\begin{abstract}
We develop an analytical framework for distribution of popular content in an Information Centric Network (ICN) that comprises of Access ICNs, a Transit ICN and a Content Provider. Using a generalized Zipf distribution to model content popularity, we devise a game theoretic approach to jointly determine caching and pricing strategies in such an ICN. Under the assumption that the caching cost of the access and transit ICNs is inversely proportional to popularity, we show that the Nash caching strategies in the ICN are 0-1 (all or nothing) strategies. Further, for the case of symmetric Access ICNs, we show that the Nash equilibrium is unique and the caching policy (0 or 1) is determined by a threshold on the popularity of the content (reflected by the Zipf probability metric), i.e., all content more popular than the threshold value is cached. We also show that the resulting threshold of the Access and Transit ICNs, as well as all prices can be obtained by a decomposition of the joint caching and pricing problem into two independent caching only and pricing only problems.   
\end{abstract}

\begin{IEEEkeywords}
Information Centric Networking (ICN), 5G, Content Delivery Networks (CCN), Network Pricing, Game Theory, Content Popularity, Zipf’s Law.
\end{IEEEkeywords}
\thispagestyle{empty}

%
\IEEEpeerreviewmaketitle

\section{Introduction}
%
%
%
%


\IEEEPARstart{T}{he} vast majority of Internet traffic relates to content access from the sources such as YouTube, Netflix, Bit Torrent, Hulu, etc. This rapid increase of content delivery in the Internet has revealed the need for a different networking paradigm. Further, as Fig. \ref{fig:fig1} describes, the emerging trend is that the users are just interested in the information (content), and not where it is located or perhaps, even how it is delivered. This high increase in demand for video content in the Internet and the need for new approaches to control this large volume of information have motivated the development of future Internet architectures based on named data objects (NDOs) instead of named hosts \cite{Jacobson}. Such architectural proposals are generally referred to as Information Centric Networking (ICN) which is a new communication paradigm to increase the efficiency of content delivery and also content availability \cite{Ahlgren}-\cite{Trossen} of future fifth generation (5G) networks. In this new concept, the network infrastructure actively contributes to content caching and distribution and every ICN node can cache and serve the requested content. To fulfill that purpose, several architectures have been proposed for ICN to reflect current and future needs better than the existing Internet architecture \cite{Dannewitz}-\cite{Trossen2}. To provide preferable services to the users in ICN, Internet service providers (ISPs) or access ICNs should be able to maintain quality of service (QoS) by improving the response time for file request. They need to cache the frequently requested or popular content locally and store them near the users in the network. To provide QoS, in-network caching is introduced to provide the network components with caching ability. Therefore every node actively contributes in content caching and operates as a potential source of content. This leads to reduction in network congestion and user access latency and increase in the throughput of the network by locally caching the more popular content \cite{Zhang}-\cite{Suksomboon}. Several works have claimed that web (file) requests in Internet are distributed according to Zipf’s law \cite{Adamic}-\cite{Shi}. Zipf’s law states that the relative probability of a request for the $i$th most popular content is proportional to $\frac{1}{i}$. However, several other studies have found out that the request distribution generally follows generalized Zipf distribution where the request rate for the $i$th most popular content is proportional to $\frac{1}{{{i^\gamma }}}$ and $\gamma$ is a positive value less than unity \cite{Breslau}-\cite{Fricker}.


\begin{figure}
\hspace*{-1.7cm} 
  \includegraphics[width=11.5cm,height=9cm]{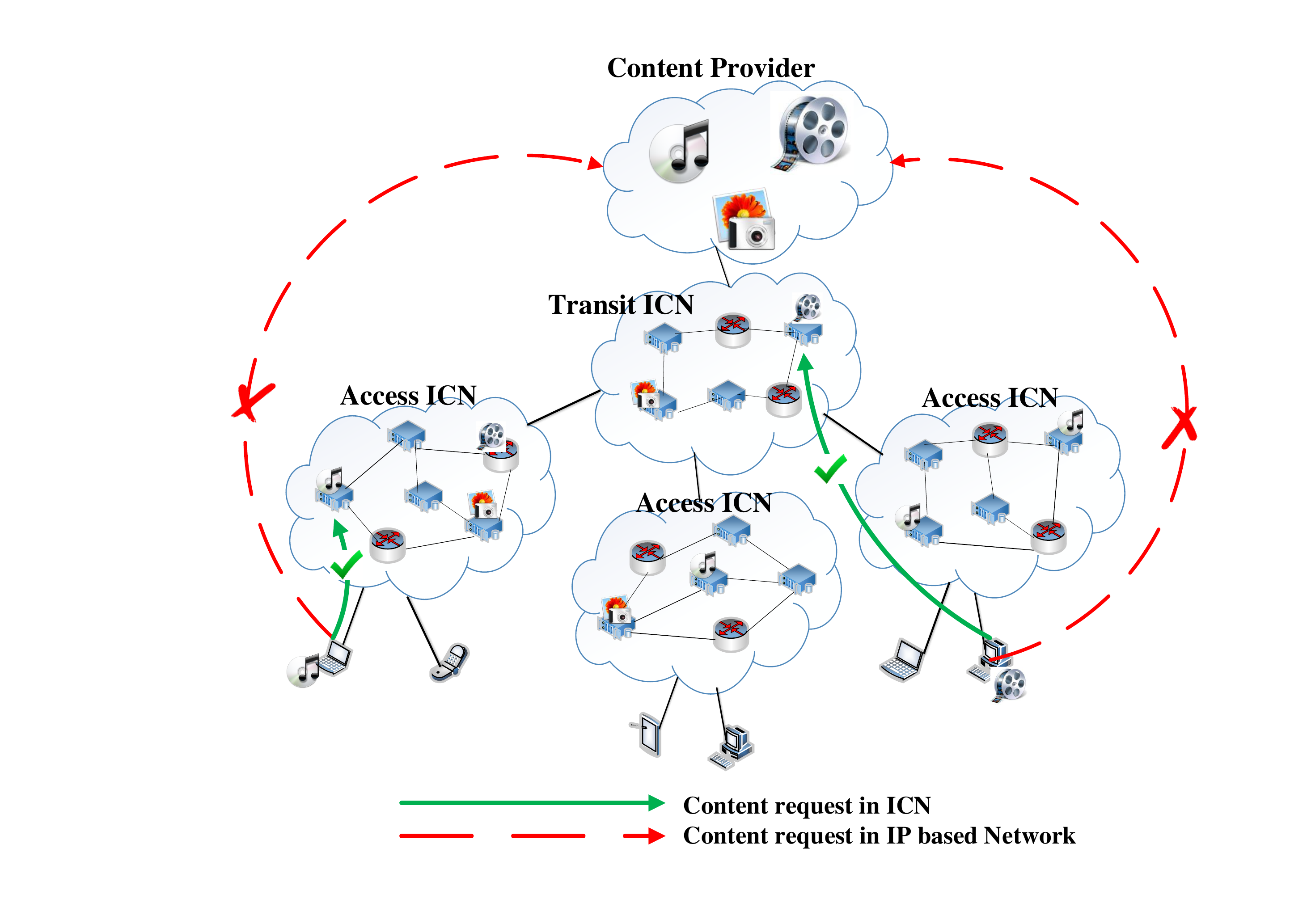}
  \caption{ICN communication model: unlike the current Internet which the users are interested in the location of the files, the ICN users just look for the content regardless of file location \cite{Hajimirsadeghi}.}
  \label{fig:fig1}
\end{figure}

Since each ICN requires cooperation in caching from other ICNs to provide a global high performance network, it is necessary to have pricing policies to incentivize all the ICNs to contribute to the caching process  \cite{Clark}. Several works have been done to address the problem of the economics of service pricing in current Internet and interconnection networks \cite{Shakkottai}-\cite{Wu}. Using contemporary pricing policies cannot incentivize the lower tier ISPs to cooperate in the future Internet architecture \cite{Rajahalme}; hence, we need to have new models to provide them with monetary incentives to collaborate in caching and distributing content when content with different popularities are available in the network.

In this paper, we investigate joint caching and pricing strategies of the access ICN, the transit ICN and the content provider based on content popularity. We study Nash strategies for a non-cooperative game among the above entities using a probabilistic model by assuming that access requests generally follow the generalized Zipf distribution. We then use the insights gained to simplify the problem by replacing two caching threshold indices instead of caching parameters for the symmetric case; where all access ICNs have the same parameters. In our model, the ICNs’ caching costs vary in respect to content popularity while the content provider cost per unit data is fixed for all content types.

The rest of the paper is organized as follows. Section II reviews some related works. In Section III, we describe the system model. The problem formulation for joint caching and pricing strategies is described in Section IV. We examine the generalized scenario for arbitrary number of access ICNs in 
Section V. The symmetric access ICNs scenario is analyzed in Section VI. Section VII presents numerical results on the resulting caching and pricing strategies, and we conclude in Section VIII.


\section{Related Work }

The benefits of in-network caching have been investigated before in the setting of distributed file systems in several recent works. In \cite{Maddah}, the problem of caching is studied from an information-theoretic viewpoint. They propose a coded caching approach that in addition to the local caching gain is able to achieve a global caching gain. A novel Cooperative Hierarchical Caching (CHC) framework is proposed in \cite{Tran}, \cite{Tran2} in the context of Cloud Radio Access Networks (C-RAN). In \cite{Tran3}, a collaborative joint caching and processing strategy for on-demand video streaming in mobile-edge computing network is envisioned.Content caching and delivery in device-to-device (D2D) networks have been studied in \cite{Ji}. The aim of this work is to improve the performance of content distribution by the use of caching and content reuse. Several approaches such as base station assisted D2D network and other schemes based on caching at the user device are compared to show the improvement of the network throughput in the presence of in-network caching. Another recent article \cite{Bastug} studies the limitation of current reactive networks and proposes a novel proactive networking paradigm where caching plays a crucial role. It shows that peak data traffic demands can be substantially reduced by proactively serving users' demands via caching. \cite{Hamidouche} has used a mean field game model to study distributed caching in ultra dense small cell networks. Zhang et al have proposed an optimal cache placement strategy based on content popularity in content centric networks (CCN) in \cite{Zhang2}. The authors in \cite{Guo} have proposed a collaborative caching and forwarding design for CCN. The problem of joint caching and pricing for data service for a single Internet service provider (ISP) is studied in \cite{Tadrous}. Similar problem but for multiple ISPs in the setting of small cell networks is investigated in \cite{Li2} using a Stackelberg game. \cite{Shen} proposes an incentive proactive cache mechanism in cache-enabled small cell networks (SCNs) in order to motivate the content providers to participate in the caching procedure.

One of the earliest studies of economic incentives in ICNs has been conducted by Rajahalme et al. \cite{Rajahalme} and has demonstrated that top level providers are not willing to cooperate in the caching process since they cannot get enough revenue. Another recent study by Agyapong et al. \cite{Agyapong} has addressed the economic incentive problem in ICN by building a simple economic model to evaluate the incentive of different types of network players in a hierarchical network infrastructure. They qualitatively showed that without explicit monetary compensation from publishers, the network will fail to deploy the socially optimal number of caches. Few prior works have used game theoretic approaches to solve the problem of caching and pricing in ICN. In \cite{Arifuzzaman} the authors have presented a game theoretic approach using matrix payoff to analyze the process of economic incentives sharing among the major network components. A pricing model was proposed in \cite{Pham} to study the economic incentives for caching and sharing content in ICNs which consists of access ICNs, a transit ICN and a content provider. This work has shown that if each player's caching (pricing) strategy remains fixed, the utility of each player becomes a concave function of its own pricing (caching) strategy. Therefore a unique Nash Equilibrium exists in a non-cooperative pricing (caching) game among different players. In our earlier work \cite{Hajimirsadeghi}, a similar model was adopted to address the problem of joint caching and pricing strategies in a network including two access ICNs, one transit ICN and one content provider. However, content popularity was not taken into account and the ICNs were agnostic to content type. The interaction of the above entities is modeled using game theoretic approach. It was shown that each player can optimize its caching and pricing strategies in a non-cooperative game. At the Nash Equilibrium, the caching strategies turn out to be 0-1 (all or nothing) strategies, where each access ICN caches all the requested demand if its caching cost is less than the transit ICN’s caching cost. When the caching cost of the access ICN is higher than the transit ICN's, all the requested content will be served by either the transit ICN or the content provider, whichever has the lower caching cost. It means that the content would be cached in the ICN with smallest caching cost. To the best of our knowledge, none of the earlier works consider the problem of joint caching and pricing in an ICN in the presence of content with different popularity. In contrast, in this paper, we study joint caching  and pricing strategies with the notion of content popularity for an ICN network with different elements considering asymmetric utility functions.


\section{System Model}

\subsection{Content Popularity}
There are $M$ different types of content in the network that each user is trying to access. Each type of content has a different measure of popularity reflected by the probability of requests for it. We consider a model where the popularity of content is uniformly similar in all parts of the network, i.e., all users in the network have the same file popularity distribution. Analyzing the impact of different per-user file popularities is an open problem. As in previous works (e.g. \cite{Li}-\cite{Adamic}, \cite{Shi}-\cite{Fricker}, \cite{Ji}-\cite{Guo}), in this paper the distribution of user requests for content is described by a generalized Zipf distribution function as follows:

\begin{equation} \label{eq:1}
{q_M}\left( m \right) = \frac{\Omega }{{{m^\gamma }}},m = 1,...,M,
\end{equation}
where $\Omega = {\left( {\sum\limits_{i = 1}^M {\frac{1}{{{i^\gamma }}}} } \right)^{ - 1}}$ and  $0 \le \gamma  \le 1$ is the exponent characterizing the Zipf distribution in which $\gamma = 0$ makes the distribution uniform and all the content will be identical in popularity, whereas the case of $\gamma  = 1$ corresponds to one where the content popularity distribution is following the classic Zipf's law and more popular content is dominant in the network. The content is ranked in order of their popularity where content $m$ is the $m$th most popular content, i.e., $m=1$ is the most popular content and $m=M$ is the least popular content.

In most of the aforementioned works that have studied in-network caching, the popularity profile of content was assumed to be identical and perfectly known by all the network components. In reality, the demand and popularity are not predictable and certain \cite{Bastug2}, \cite{Tadrous1}. The problem of caching has been studied in \cite{Blasco} wherein the users have access to demand history but no knowledge about popularity. Several other papers have used learning-based approaches to estimate the popularity profile at the user side \cite{Sengupta}-\cite{Bastug4}. While content learning is more accurate for modeling content popularity, the reason we have used the Zipf model is due to (1) experimental results showing reasonable fit to the Zipf model and (2) analytical tractability provided by the Zipf model. Our framework can be extended by changing the demand model and considering a repeated game with a parametric Zipf distribution. In each time slot of the game, this parameter can be estimated in an optimal way using a learning process.


\subsection{Cost Model}
Although the prices are fixed for all types of content, since the ICNs want to earn more profit by caching the content, they are more willing to locally store the content which is more popular. Thus, the access ICNs' and the transit ICN's caches treat the content differently in regard to their popularity. As the content gets more popular, the ICNs incur less caching costs to locally store the content.

\begin{definition}
\label{t1}
For a finite cache, the caching costs of access and transit ICNs is defined to be inversely proportional to the content popularity as follows
\begin{equation} \label{eq:2}
{c_{{x_i}}} = \frac{{{c_{{x_0}}}}}{{{q_M}\left( i \right)}},
\end{equation}
where, without loss of generality, ${c_{{x_0}}}$ is a fixed initial caching cost at ICN $x$.
\end{definition}

Using equations (\ref{eq:1}) and (\ref{eq:2}), we see in Fig. \ref{fig:accessC2}, for fixed values of $i$, ${c_{{x_i}}}$ is a decreasing function of $\gamma$ when $i$ is small (i.e., more popular content). On the other hand, ${c_{{x_i}}}$ is an increasing function of $\gamma$ when $i$ is large (i.e., less popular content). Unlike the access and transit ICNs, the content provider has no priority for caching the content and caches all types of content. The content provider incurs the constant cost $c_O$ for every unit of data that it serves for the transit ICN.

%


\subsection{Access and Demand Model}
For simplicity in illustration, as shown in Fig. \ref{fig:fig2}, we begin with a hierarchical network model \cite{Pham}, \cite{Borst} with two access ICNs (A and B), one transit ICN (C), one content provider (O) and an arbitrary number of users who can switch from one access ICN to another. The access ICNs connect the end-users to the content network and the transit ICN provides wide-area transport for the access ICNs while the content provider provides the content for the users. Fig. \ref{fig:fig2} also shows the monetary and data flows among different entities with the various prices described in Table \ref{table1}. The network economy depends on two effective factors: caching and pricing. Under the assumption that each ICN can have access to all content, it can decide to either cache the entire or portion of the requested content, or get it from somewhere else. The caching strategy adopted by each entity is denoted by the parameter $\alpha$  that takes values in the interval $\left[ {0,1} \right]$. Every ICN decides to cache different types of content independently, therefore, we have a specific caching variable for each type of content. We have denoted ${\alpha _{I,{S_i}}}$, where $I$ can either be  access or transit ICN and $S$ is any cache in the network (possibly another ICN or content provider), as the fraction of ICN $I$'s demand for content type $i$ that comes from cache $S$. Each ICN also has different pricing strategies. These strategies are the prices that each player charges others for the provided service. The pricing is based on the usage, i.e., price per unit data. Each access ICN sets two different prices: (1) the network price per unit data for transporting the content; and (2) the storage price per unit data for providing content from its cache for other ICNs. For example, the network and storage prices for access ICN $A$ are denoted as $P_A^{\left( n \right)}$ and $P_A^{\left( s \right)}$, respectively. The total price per unit data is the sum of these two prices and is denoted by ${P_A} = P_A^{\left( n \right)} + P_A^{\left( s \right)}$. We will find it useful to utilize an alternative form of the above. Following the charging policies of several storage services for the Internet such as Amazon S3, it would be practical to assume that the storage price is typically less than the network price. The linear relationship between network price and storage price while empirical, has been used earlier in \cite{Pham} and we follow this assumption. It can be represented by $P_A^{\left( n \right)} = {\beta _A}P_A^{\left( s \right)}$ where ${\beta _A} > 1$. Thus, the relationship between ${P_A}$ and $P_A^{\left( s \right)}$ would be $P_A^{\left( s \right)} = \left( {\frac{1}{{1 + {\beta _A}}}} \right){P_A}$. As a result, each access ICN or transit ICN will have a set of strategies for pricing in the interval $\left[ {0,\infty } \right)$. The content provider pricing strategies set also consists of the content price $P_O^{\left( c \right)}$ that the users should pay for content and the storage price $P_O^{\left( s \right)}$ which is the price for providing the content from the content provider cache. The access ICN $A$ and $B$ charge prices ${P_A}$ and $P_B$ to their users and $P_A^{\left( s \right)}$ and $P_A^{\left( s \right)}$ to the transit network if they store or forward the content that the transit ICN had asked for. The transit ICN $C$ charges access ICNs $A$ and $B$ with price ${P_C}$, if it stores or forwards their requested content. 
The content provider charges users with content price $P_O^{\left( c \right)}$ and transit ICN $C$ with storage price $P_O^{\left( s \right)}$  if it stores transit ICN requested content.


\begin{figure}
\centering
\hspace*{-0.5cm}
  \includegraphics[width=11cm,height=9cm]{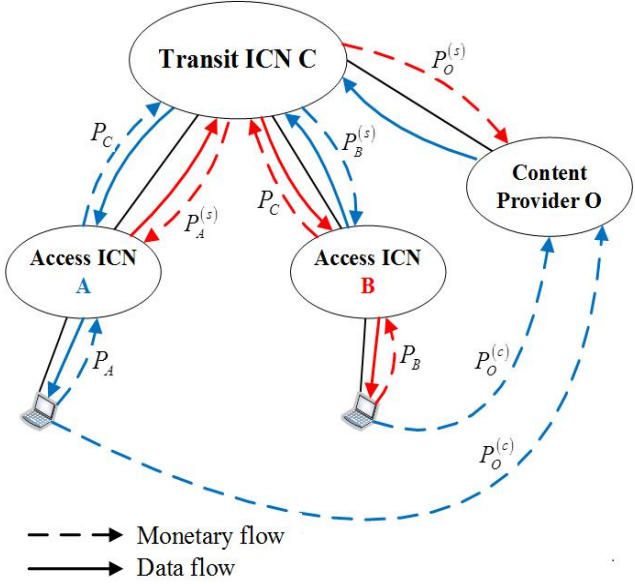}
  \caption{Interaction between different entities in a simplified model of an ICN \cite{Hajimirsadeghi}.}
\label{fig:fig2}
\end{figure}


To model the behavior of users, we have considered content demand at each access ICN to be a linear function of the prices.

\begin{definition}
\label{t2}
The users' demands are affected by both content price and access price and defined to be a linear function as follows:  

\begin{equation} \label{eq:3}
\begin{array}{l}
{\sigma _A} = 1 - {\rho _A}{P_A} + {\rho _B}{P_B} - {\rho _0}P_o^{(c)}\\
{\sigma _B} = 1 + {\rho _A}{P_A} - {\rho _B}{P_B} - {\rho _0}P_o^{(c)}

\end{array}
\end{equation}
\end{definition}
\noindent where ${\rho _A}$, ${\rho _B}$ and ${\rho _0}$ are the reflective coefficients of the prices' influence on users’ demands and are positive. The demands are normalized per unit data. We observe that the users’ demands directly depend on the access ICNs prices and content price. For example, as the content price $P_O^{\left( c \right)}$  increases, the users’ demands decrease. If one of the access ICNs increases its price, the users will switch to the other ICN. Table \ref{table1} summarizes our notation. 

%
%
%



\begin{table}
\centering
\caption{Summary of notation}\label{table1}
\begin{tabular}{| c | m{6.8cm} |}
\hline
Notation & Description \\ \hline
$K$ & Number of access ICNs \\ \hline
$P_k^{\left( n \right)}$ & Network price of ICN $k$ per unit data \\ \hline
$P_k^{\left( s \right)}$ & Storage price of access ICN or content provider $k$ per unit data \\ \hline
${P_C}$ & Transit price charge by transit ICN $C$ \\ \hline
${P_k}$ & Access ICN $k$'s price to its users per unit data \\ \hline
$P_O^{\left( c \right)}$ & Content price of content provider $O$ per unit data \\ \hline
${\rho _k}$ & Reflective coefficient of price’s influence on access ICN $k$'s user demand \\ \hline
${\sigma _k}$ & Normalized total user demands for ICN $k$ per unit data \\ \hline
${\sigma _{{k_i}}}$ & Normalized total user demands for ICN $k$ for content type $i$ per unit data \\ \hline
${\alpha _{I,{S_i}}}$ & Fraction of ICN $I$'s demand for content type $i$ that comes from cache $S$ \\ \hline
${c_{{k_i}}}$ & Caching cost of ICN $k$ for content type $i$ \\ \hline
${c_{{k_0}}}$ & Initial caching cost of ICN $k$ \\ \hline
${c_O}$ & Content provider $O$ cost \\ \hline
${\beta _k}$ & Scaling parameter between network price and storage price (greater than unity) \\ \hline
$\gamma$ & Zipf popularity law exponent  \\ \hline
$Th$ & Caching threshold index for access ICNs  \\ \hline
$Th_C$ & Caching threshold index for transit ICN $C$ \\ \hline
${q_M}\left( m \right)$ & Popularity (request rate) for content type $m$ over a set of $M$ content \\ \hline

\end{tabular}
\end{table}

As illustrated in  Fig. \ref{fig:fig2}, the interaction between different entities in the ICN model results in a conflict among the ICNs (players) when they unilaterally try to maximize their revenue. In the following section, we use a game theoretic approach to solve the joint caching and pricing strategies for each entity in the ICN network.


\section{Joint Caching and Pricing Strategy}

\subsection{Utility Function}

Each ICN can cache the content or just forward the requests to other ICNs or content provider based on the utility that it gains. The utility function for each player is defined as the utility received by providing the services for others. Each player incurs a caching cost with respect to the popularity of the content when it stores a unit of data. Therefore, as shown in Fig. \ref{fig:fig2}, the utility functions for the access ICNs $A$ and $B$, the transit ICN $C$ and the content provider $O$ can be formulated as an opportunistic function in terms of the prices, caching costs, demands and fraction of content stored and popularity.

The access ICN $A$ incurs a caching cost of ${c_{{A_i}}}$ for content type $i$ if it decides to store a unit of data. Therefore, the utility function of the access ICN $A$ is given as the average utility for all different types of content as follows:
\begin{equation} \label{eq:4}
{U_A} = \sum\limits_{i = 1}^M {{q_M}(i)} \left\{ \begin{array}{l}
{\sigma _A}{\alpha _{A,{A_i}}}\left( {{P_A} - {c_{{A_i}}}} \right) +\\
{\sigma _A}{\alpha _{A,Ou{t_i}}}\left( {{P_A} - {P_C}} \right) + \\
{\sigma _B}{\alpha _{B,{A_i}}}\left( {\left( {\frac{1}{{1 + {\beta _A}}}} \right){P_A} - {c_{{A_i}}}} \right)
\end{array} \right\},
\end{equation}
where the first term in (\ref{eq:4}) is the utility that results when ICN $A$ stores a portion of its users’ demands in its own cache. ${{q_M}(i)}$ denotes the popularity (request rate) of content type $i$ over a set of $M$ different types of content, $\sigma _A$ is the total demand that users have requested to the access ICN $A$  and $\alpha_{A,{A_i}}$ is the fraction of ICN $A$'s demand for content type $i$ that is going to be served by ICN $A$'s cache. Therefore, $\sigma _A$ multiplied by ${{q_M}(i)}$ and $\alpha_{A,{A_i}}$ is the demand of access ICN $A$ for content type $i$ that is served by transit ICN $A$'s cache and $\left( {{P_A} - {c_{{A_i}}}} \right) $ is the revenue of access ICN A by serving this portion of the requested demand.
The second term is the utility that results when ICN $A$ forwards a portion of its users’ demand to the transit ICN $C$. The third term is the utility that results when the transit ICN $C$ forwards a portion of ICN $B$'s users' demand to ICN $A$. The access ICN $A$ can control only the caching and pricing parameters ${\alpha _{A,{A_i}}}$, ${P_A}$ and $P_A^{(s)}$. Note that ${\alpha _{A,Ou{t_i}}} = 1 - {\alpha _{A,{A_i}}}$ or ${\alpha _{A,Ou{t_i}}} = {\alpha _{A,{B_i}}} + {\alpha _{A,{C_i}}} + {\alpha _{A,{O_i}}}$. The utility function of the access ICN $B$ can be defined in a similar way as follows:
\begin{equation} \label{eq:5}
{U_B} = \sum\limits_{i = 1}^M {{q_M}(i)\left\{ \begin{array}{l}
{\sigma _B}{\alpha _{B,{B_i}}}\left( {{P_B} - {c_{{B_i}}}} \right) +\\
{\sigma _B}{\alpha _{B,Ou{t_i}}}\left( {{P_B} - {P_C}} \right) + \\
{\sigma _A}{\alpha _{A,{B_i}}}\left( {\left( {\frac{1}{{1 + {\beta _B}}}} \right){P_B} - {c_{{B_i}}}} \right)
\end{array} \right\}.} 
\end{equation}

The access ICN $B$ can control only the caching and pricing parameters ${\alpha _{B,{B_i}}}$, ${P_B}$ and $P_B^{(s)}$. Note that ${\alpha _{B,Ou{t_i}}} = 1 - {\alpha _{B,{B_i}}}$ or ${\alpha _{B,Ou{t_i}}} = {\alpha _{B,{A_i}}} + {\alpha _{B,{C_i}}} + {\alpha _{B,{O_i}}}$. The transit ICN $C$ gains a profit if the access ICNs $A$ and $B$ request a content through it. Equation (\ref{eq:6}) consists of two terms. The first term is the utility that results when transit ICN $C$ stores ICN $A$ and $B$'s content in its own cache and the second term is the utility that results when it forwards ICN $A$ and $B$'s content to other ICN's or content provider. We model the ICN $C$'s utility in the following way: 

\begin{equation} \label{eq:6}
\begin{array}{*{20}{l}}
{{U_C} = \sum\limits_{X \in \{ A,B\} } {\sum\limits_{i = 1}^M {{q_M}(i){\sigma _X}} } {\alpha _{X,{C_i}}}({P_C} - {c_{{C_i}}}) + }\\
{\sum\limits_{X \in \{ A,B\} } {\sum\limits_{L \in \{ A,B,O\} ,L \ne X} {\sum\limits_{i = 1}^L {{q_M}(i){\sigma _X}{\alpha _{X,{L_i}}}} } } ({P_C} - P_L^{(s)})}
\end{array},
\end{equation}
where the transit ICN $C$ has control over caching variables ${\alpha _{A,{B_i}}},{\alpha _{A,{C_i}}},{\alpha _{A,{O_i}}},{\alpha _{B,{A_i}}},{\alpha _{B,{C_i}}},{\alpha _{B,{O_i}}}$ and ${P_C}$. Finally, if the requests are not served by any access ICNs or transit ICN, they will be forwarded to the content provider that has all the content in its cache and the costs for all types of content are identical and equal to ${c_O}$. The content provider's utility function can be expressed formally as 

\begin{equation} \label{eq:7}
\begin{array}{l}
{U_O} = \sum\limits_{i = 1}^M {{q_M}(i)\left[ {{\sigma _A}{\alpha _{A,{O_i}}} + {\sigma _B}{\alpha _{B,{O_i}}}} \right]} \left( {P_O^{\left( s \right)} - {c_O}} \right)\\
  \qquad+ \left( {{\sigma _A} + {\sigma _B}} \right)P_O^{\left( c \right)}.
\end{array}
\end{equation}
The first term in (\ref{eq:7}) results when the content provider $O$ charges transit ICN $C$ with storage price to deliver the requested content to it. The second term comes from the content price that content provider charges the users. The content provider can control pricing parameters $P_O^{(c)}$ and $P_O^{(s)}$.

Because of the competitive nature of this problem, we can present a solution in the analytical setting of a game theoretic framework. Let $G = \left[ {N,\left\{ {{S_j}} \right\},\left\{ {{U_j}\left( . \right)} \right\}} \right]$  denote the non-cooperative game among players from the set $N = \{ A,B,C,O\}$ , where ${S_j} = \left( {{P_j},{\alpha _j}} \right)$ is the set of joint caching $\left( {{\alpha _j}} \right)$ and pricing $\left( {{P_j}} \right)$ strategies and ${U_j}\left( . \right)$ is the utility function of player $j$. The strategy space of all the entities excluding the $j$th player is denoted by ${S_{ - j}}$. In the joint caching and pricing game, each player tries to maximize its own utility by solving the following optimization problem for all $j \in N$,

\begin{equation} \label{eq:8}
\mathop {{\rm{\it{max}}}}\limits_{{s_j} \in {S_j}} {\rm{   }}{U_j}\left( {{s_j},{S_{ - j}}} \right),\forall j \in N.
\end{equation}

It is necessary to characterize a set of caching and pricing strategies where all the players are satisfied with the utility they receive, given the strategy selection of other players. Such an operating point, if it exists, is called equilibrium. The notion that is most widely used for game theoretic problems is the \textit{Nash Equilibrium(\textit{NE})} \cite{Nash}. A set of pricing and caching strategies $S_j^* = \left( {P_j^*,\alpha _j^*} \right)$ constitutes a \textit{NE} if for every $j \in  N$, ${U_j}\left( {s_j^*,{S_{ - j}}} \right) \ge {U_j}\left( {{s_j},{S_{ - j}}} \right)$ for all ${s_j} \in {S_j}$. The \textit{NE} of the game is one where no player benefits by deviating from her strategy unilaterally.


\subsection{Characterization of Nash Strategies}

To find the \textit{NE} using the best response functions we need to solve the four following optimization problems for ICN $A$, ICN $B$, ICN $C$ and the content provider $O$, respectively. The maximization problem for ICN $A$ is:
\begin{equation} \label{eq:9}
\begin{array}{l}
\mathop {\it{max} }\limits_{{\alpha _{A,{A_i}}},{P_A}} {U_A} = \sum\limits_{i = 1}^M {{q_M}(i)} \left\{ \begin{array}{l}
{\sigma _A}{\alpha _{A,{A_i}}}\left( {{P_C} - {c_{{A_i}}}} \right)+\\
{\sigma _A}\left( {{P_A} - {P_C}} \right)+\\
{\sigma _B}{\alpha _{B,{A_i}}}\left( {\left( {\frac{1}{{1 + {\beta _A}}}} \right){P_A} - {c_{{A_i}}}} \right)
\end{array} \right\}\\
s.t.\begin{array}{*{20}{c}}
{}&{}&{0 \le {\alpha _{A,{A_i}}} \le 1}
\end{array},{P_A} > 0
\end{array},
\end{equation}
where the access ICN $A$ tries to maximize her utility by changing its caching (${\alpha _{A,{A_i}}}$) and pricing  (${P_A}$) strategies while other players' strategies are unknown to her. Similarly, the maximization problem for ICN $B$ is:
\begin{equation} \label{eq:10}
\begin{array}{l}
\mathop {\it{max} }\limits_{{\alpha _{B,{B_i}}},{P_B}} {U_B} = \sum\limits_{i = 1}^M {{q_M}(i)\left\{ \begin{array}{l}
{\sigma _B}{\alpha _{B,{B_i}}}\left( {{P_C} - {c_{{B_i}}}} \right)+\\
{\sigma _B}\left( {{P_B} - {P_C}} \right)+\\
{\sigma _A}{\alpha _{A,{B_i}}}\left( {\left( {\frac{1}{{1 + {\beta _B}}}} \right){P_B} - {c_{{B_i}}}} \right)
\end{array} \right\}} \\
s.t.\begin{array}{*{20}{c}}
{}&{}&{0 \le {\alpha _{B,{B_i}}} \le 1}
\end{array},{P_B} > 0
\end{array}.
\end{equation}
The transit ICN $C$ maximization problem is given as the following equation: 
\begin{equation} \label{eq:11}
\begin{array}{l}
\mathop {\it{max} }\limits_{{\alpha _{A,{C_i}}},{\alpha _{B,{C_i}}},{\alpha _{A,{B_i}}},{\alpha _{A,{O_i}}},{\alpha _{B,{A_i}}},{\alpha _{B,{O_i}}},{P_C}} {U_C} =
\sum\limits_{i = 1}^M {{q_M}(i){\sigma _A}}\\
 \times \left\{ \begin{array}{l}
{\alpha _{A,{C_i}}}\left( {{P_C} - {c_{{C_i}}}} \right) + \\
{\alpha _{A,{B_i}}}\left( {{P_C} - P_B^{\left( s \right)}} \right) + {\alpha _{A,{O_i}}}\left( {{P_C} - P_O^{\left( s \right)}} \right)
\end{array} \right\}+\\
 \sum\limits_{i = 1}^M {{q_M}(i){\sigma _B}}
 \left\{ \begin{array}{l}
{\alpha _{B,{C_i}}}\left( {{P_C} - {c_{{C_i}}}} \right) +\\
{\alpha _{B,{A_i}}}\left( {{P_C} - P_A^{\left( s \right)}} \right) + {\alpha _{B,{O_i}}}\left( {{P_C} - P_O^{\left( s \right)}} \right)
\end{array} \right\},\\
s.t.\begin{array}{*{20}{c}}
{}&{}&\begin{array}{l}
{\alpha _{A,{A_i}}} + {\alpha _{A,{B_i}}} + {\alpha _{A,{C_i}}} + {\alpha _{A,{O_i}}} = 1\\
{\alpha _{B,{B_i}}} + {\alpha _{B,{A_i}}} + {\alpha _{B,{C_i}}} + {\alpha _{B,{O_i}}} = 1\\
0 \le {\alpha _{A,{B_i}}} \le 1,0 \le {\alpha _{A,{C_i}}} \le 1,0 \le {\alpha _{A,{O_i}}} \le 1\\
0 \le {\alpha _{B,{A_i}}} \le 1,0 \le {\alpha _{B,{C_i}}} \le 1,0 \le {\alpha _{B,{O_i}}} \le 1\\
{P_C} \ge 0
\end{array}
\end{array}
\end{array}
\end{equation}
and the content provider $O$ maximizes its utility using equation (\ref{eq:12})
\begin{equation} \label{eq:12}
\begin{array}{l}
\mathop {\it{max} }\limits_{P_O^{\left( c \right)},P_O^{\left( s \right)}} {U_O} = \left( {{\sigma _A} + {\sigma _B}} \right)P_O^{\left( c \right)}\\
\begin{array}{*{20}{c}}
\end{array} \qquad + \sum\limits_{i = 1}^M {{q_M}(i)\left[ {{\sigma _A}{\alpha _{A,{O_i}}} + {\sigma _B}{\alpha _{B,{O_i}}}} \right]} \left( {P_O^{\left( s \right)} - {c_O}} \right)\\
s.t.\begin{array}{*{20}{c}}
{}&{}&{P_O^{\left( C \right)} > 0,P_O^{\left( s \right)} > {c_O}}
\end{array}
\end{array}
\end{equation}


\begin{theorem}
\label{t1}
The caching variables ${\alpha _{I,{S_i}}}$ take on values of either 0 or 1 at the equilibrium of the caching and pricing strategies game.
\end{theorem}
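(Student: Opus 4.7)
The plan is to exploit the fact that each controller's utility is \emph{affine} in each caching variable it controls, so every best response lies at a vertex of the feasible polytope, and those vertices have all coordinates in $\{0,1\}$.

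First I would examine ICN $A$'s problem \eqref{eq:9}. Substituting $\alpha_{A,\mathrm{Out}_i}=1-\alpha_{A,A_i}$ makes the dependence on each caching variable collapse to the single affine term $q_M(i)\,\sigma_A\,\alpha_{A,A_i}(P_C-c_{A_i})$; the term involving $\alpha_{B,A_i}$ is controlled by the transit ICN and is treated as constant in $A$'s best response. Hence $U_A$ is separable and affine in each $\alpha_{A,A_i}\in[0,1]$, giving the best response $\alpha_{A,A_i}^{*}=1$ if $P_C>c_{A_i}$ and $\alpha_{A,A_i}^{*}=0$ if $P_C<c_{A_i}$. The identical argument applied to \eqref{eq:10} produces a $\{0,1\}$-valued $\alpha_{B,B_i}^{*}$.

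Next I would treat ICN $C$'s problem \eqref{eq:11}, which is linear in the six caching variables it controls and separable across $i$. For each $i$ and each $X\in\{A,B\}$ the subproblem is to maximise a linear functional of $(\alpha_{X,B_i},\alpha_{X,C_i},\alpha_{X,O_i})$ over the simplex $\{\alpha\ge 0,\ \alpha_{X,B_i}+\alpha_{X,C_i}+\alpha_{X,O_i}=1-\alpha_{X,X_i}\}$. A linear function on a simplex attains its maximum at a vertex, so at most one of the three coordinates is strictly positive; combined with the already $\{0,1\}$-valued $\alpha_{X,X_i}$, each of the three is therefore either $0$ or $1$. Since the content provider $O$ controls only pricing variables (not caching variables), no further verification is required, and the theorem follows by collecting the best-response characterisations at any fixed point of the joint game.

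The main subtlety I anticipate is the knife-edge case of a vanishing coefficient, e.g.\ $P_C=c_{A_i}$ at the access layer or a tie among $\{P_C-P_B^{(s)},\,P_C-c_{C_i},\,P_C-P_O^{(s)}\}$ at the transit layer. On such a tie the best-response set expands to an entire face of the feasible polytope and the argmax is no longer forced to be a vertex. The theorem is then read as the \emph{existence} of a pure $\{0,1\}$ Nash profile: any vertex of the tied face can be selected without changing any payoff, and an arbitrarily small generic perturbation of the costs or prices eliminates such coincidences.
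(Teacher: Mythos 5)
Your argument is essentially the paper's proof: the utilities are affine in each caching variable, so every best response is attained at an extreme point of the feasible set, which forces each $\alpha_{I,S_i}$ into $\{0,1\}$. Your version is in fact a bit more careful than the paper's one-line proof, since you treat the transit ICN's simplex constraint correctly via vertices rather than interval endpoints and you explicitly flag the knife-edge ties ($P_C = c_{A_i}$, or equal storage prices/costs at the transit layer) where a $0$--$1$ choice is merely available rather than forced.
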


\begin{proof}
The solution of a maximization (minimization) problem with an objective function that has a linear relationship with the variable is the boundary point of the feasible interval. Therefore, since the relationship between utility function and caching parameters are linear to maximize the utility functions, they just take on the boundary values. Since ${\alpha _{I,{S_i}}} \in \left[ {0,1} \right]$, therefore, they can be either 0 or 1.   
\end{proof}

Given Theorem \ref{t1}, it follows that all caching variables adopt binary values. For example, according to equation (\ref{eq:9}), if ${P_C} > {c_{{A_i}}}$, the caching variable ${\alpha _{A,{A_i}}}$ should be 1 to maximize the access ICN $A$'s utility function. Whenever ${P_C} > {c_{{A_i}}}$ it means that the transit price for delivering the requested content type $i$ to users under the access ICN $A$ is smaller than the caching cost of access ICN $A$ for locally storing the requested file itself. Therefore the access ICN decides to store the content locally in its cache rather than transferring the request to transit ICN $C$. That is the reason that caching variable ${\alpha _{A,{A_i}}}$ gets the value 1. The caching strategies at equilibrium for different conditions are summarized in Table \ref{table2}.

Unlike the caching parameters which only get binary values; the pricing parameters can be continuous. Therefore, this optimization problem is a mixed integer program with multiple objective functions, and in general, the uniqueness of the NE in terms of pricing and caching strategies cannot be characterized.


\begin{table*}
\centering
\caption{Caching table for each content type $i$}\label{table2}
\begin{tabular}{| c | c | c | c | c | c | c | c | c | c |}
\hline
  &  Condition & ${\alpha _{A,{A_i}}}$ & ${\alpha _{A,{B_i}}}$& ${\alpha _{A,{C_i}}}$ & ${\alpha _{A,{O_i}}}$ & ${\alpha _{B,{B_i}}}$ & ${\alpha _{B,{A_i}}}$ & ${\alpha _{B,{C_i}}}$ & ${\alpha _{B,{O_i}}}$\\ \hline

1 & $P_C > \left({{c_{{A_i}}} \& c_{{B_i}}}  \right)$ & 1 & 0 & 0 & 0 & 1 & 0 & 0 & 0 \\ \hline

  &  $P_C > c_{A_i} \& P_C < c_{B_i} $ &  \multicolumn{8}{|c|}{ }  \\ \hline

2  & $ c_{C_i}=min\left\{ {{c_{{C_i}}},P_O^{\left( s \right)},P_A^{\left( s \right)}, P_B^{\left( s \right)}} \right\}$  & 1 & 0 & 0 & 0 & 0 & 0 & 1 & 0   \\ \hline

3 & $ P_O^{\left( s \right)}=min\left\{ {{c_{{C_i}}},P_O^{\left( s \right)},P_A^{\left( s \right)}, P_B^{\left( s \right)}} \right\}$  & 1 & 0 & 0 & 0 & 0 & 0 & 0 & 1  \\ \hline

4 & $ P_A^{\left( s \right)}=min\left\{ {{c_{{C_i}}},P_O^{\left( s \right)},P_A^{\left( s \right)}, P_B^{\left( s \right)}} \right\}$  & 1 & 0 & 0 & 0 & 0 & 1 & 0 & 0  \\ \hline

&  $P_C < c_{A_i} \& P_C > c_{B_i} $  &  \multicolumn{8}{|c|}{ } \\ \hline

5  & $ c_{C_i}=min\left\{ {{c_{{C_i}}},P_O^{\left( s \right)},P_A^{\left( s \right)}, P_B^{\left( s \right)}} \right\}$  & 0 & 0 & 1 & 0 & 1 & 0 & 0 & 0   \\ \hline

6 & $ P_O^{\left( s \right)}=min\left\{ {{c_{{C_i}}},P_O^{\left( s \right)},P_A^{\left( s \right)}, P_B^{\left( s \right)}} \right\}$  & 0 & 0 & 0 & 1 & 1 & 0 & 0 & 0  \\ \hline

7 & $ P_B^{\left( s \right)}=min\left\{ {{c_{{C_i}}},P_O^{\left( s \right)},P_A^{\left( s \right)}, P_B^{\left( s \right)}} \right\}$  & 0 & 1 & 0 & 0 & 1 & 0 & 0 & 0  \\ \hline

 & $P_C < \left({{c_{{A_i}}} \& c_{{B_i}}}  \right)$ &  \multicolumn{8}{|c|}{ }    \\ \hline

8  & $ c_{C_i}=min\left\{ {{c_{{C_i}}},P_O^{\left( s \right)},P_A^{\left( s \right)}, P_B^{\left( s \right)}} \right\}$  & 0 & 0 & 1 & 0 & 0 & 0 & 1 & 0   \\ \hline

9 & $ P_O^{\left( s \right)}=min\left\{ {{c_{{C_i}}},P_O^{\left( s \right)},P_A^{\left( s \right)}, P_B^{\left( s \right)}} \right\}$  & 0 & 0 & 0 & 1 & 0 & 0 & 0 & 1  \\ \hline

\end{tabular}
\end{table*}



\section{Generalization to $K$ Access ICNs}

We can extend the case of two access ICNs, one transit ICN and  one content provider to a generalized scenario of K access ICNs, one transit ICN and one content provider. We consider $\tilde K = \left\{ {{A_1},{A_2},...,{A_K}} \right\}$ as the set of access ICNs which are connected to transit ICN $C$ and ${\alpha}$ is the set of all caching variables. The demand function of each access ICN is defined in Definition \ref{t3}. 
 
\begin{definition}
\label{t3}
The received demands at access ICN $A_j$ is defined as

\begin{equation} \label{newdem}
\begin{array}{l}
{\sigma _{{A_j}}} = 1 - {\rho _{{A_j}}}{P_{{A_j}}} +
\frac{1}{{K - 1}}\left[ {\sum\limits_{k = 1,k \ne j}^K {{\rho _{{A_k}}}{P_{{A_k}}}} } \right]\\ \qquad- {\rho _0}P_o^{(c)},\forall j=1,...,K. 
\end{array}
\end{equation}
\end{definition}

The received demand of content type $i$ by access ICN $A_j$ can be shown as ${\sigma _{{A_j}_i}} = {\sigma _{{A_j}}}{q_M}\left( i \right)$. Following the previous section, the maximization problem of each access ICN ${A_j} \in K$ can be defined as follows:

\begin{equation} \label{access}
\begin{array}{l}
\mathop {\it{max} }\limits_{{\alpha _{{A_j},{A_{{j_i}}}}},{P_{{A_j}}}} {U_{{A_j}}} = \sum\limits_{i = 1}^M {{q_M}(i)} \left\{ \begin{array}{l}
{\sigma _{{A_j}}}{\alpha _{_{{{\left( {{A_j},{A_j}} \right)}_i}}}}\left( {{P_C} - {c_{{A_{{j_i}}}}}} \right)
  \\+{\sigma _{{A_j}}}\left( {{P_{{A_j}}} - {P_C}} \right)
 +\\ \left( {\left( {\frac{1}{{1 + {\beta _{{A_j}}}}}} \right){P_{{A_j}}} - {c_{{A_{{j_i}}}}}} \right)\times\\
\sum\limits_{k \in \tilde K,k \ne A_j} {{\sigma _k}{\alpha _{{{\left( {k,{A_j}} \right)}_i}}}} 
\end{array} \right\}\\
s.t.\begin{array}{*{20}{c}}
{}&{}&{0 \le {\rm{ }}{\alpha _{{A_j},{A_{{j_i}}}}} \le 1}
\end{array},{P_{{A_j}}} > 0
\end{array}
\end{equation}
The transit ICN $C$ maximization problem is given by the following equation:

%
%
%

\begin{equation} \label{transit}
\begin{array}{*{20}{l}}
{\mathop {\it{max} }\limits_{\alpha ,{P_C}} {U_C} = \sum\limits_{k \in \tilde K} {\sum\limits_{i = 1}^M {{q_M}(i){\sigma _k}} } {\alpha _{k,{C_i}}}({P_C} - {c_{{C_i}}})  }\\
\qquad +{\sum\limits_{k \in \tilde K} {\sum\limits_{L \in \tilde K \cup \left\{ O \right\},L \ne k} {\sum\limits_{i = 1}^M {{q_M}(i){\sigma _k}{\alpha _{k,{L_i}}}} } } ({P_C} - P_L^{(s)})}\\
\begin{array}{l}
\\
s.t.\begin{array}{*{20}{c}}
{}&\begin{array}{l}
{\alpha _{{{\left( {{A_j},{A_j}} \right)}_i}}} + \sum\limits_{\substack{L \in \tilde K \cup \left\{ {C,O} \right\} \\ L \ne {A_j}}} {{\alpha _{{{\left( {{A_j},L} \right)}_i}}}}  = 1\\
0 \le \alpha  \le 1\\
{P_C} \ge 0
\end{array}
\end{array}
\end{array}
\end{array}
\end{equation}
and the content provider $O$ maximizes its utility using equation (\ref{content}).

\begin{equation} \label{content}
\begin{array}{*{20}{l}}
{\mathop {\it{max} }\limits_{P_O^{\left( c \right)},P_O^{\left( s \right)}} {U_O} = \sum\limits_{i = 1}^M {{q_M}(i)\left[ {\sum\limits_{k \in \tilde K} {{\sigma _k}} {\alpha _{k,{O_i}}}} \right]} \left( {P_O^{\left( s \right)} - {c_O}} \right)}\\
{\begin{array}{*{20}{c}}
{}&{ \qquad+ \sum\limits_{k \in \tilde K} {{\sigma _k}} P_O^{\left( c \right)}}
\end{array}}\\
{s.t.\begin{array}{*{20}{c}}
{}&{}&{P_O^{\left( C \right)} > 0,P_O^{\left( s \right)} > {c_O}}
\end{array}}
\end{array}
\end{equation}

Theorem \ref{t1} can be extended for the generalized $K$ Access ICN case with the same reasoning and all the caching variables take binary values (i.e., all or nothing 0-1 strategies). So, the joint caching and pricing strategy game in the general form is also a mixed integer program. In the next section we will simplify the problem with the assumption of symmetric access ICNs with similar characteristics and try to give some analytical and intuitive results.


\section{Symmetric Access ICNs Scenario Analysis}

In the previous section, the general form of the caching and pricing strategies for ICNs was formulated through equations (\ref{access})-(\ref{content}) as a set of mixed integer programs. In this section, in order to analytically study the equilibrium of our proposed model, we consider the symmetric scenario, where all access ICNs have the same specifications. For the symmetric scenario, where all the access ICNs are exactly the same, we consider ${\rho _k} = \rho ,{\beta _k} = \beta ,{c_{{k_0}}} = {c_0}\forall k \in \tilde K$.

\begin{theorem}
\label{t2}
In the symmetric case, for each content type $i$, ${\alpha _{{A_k,A_j}_i}} = 0$, $\forall k \ne j$.
\end{theorem}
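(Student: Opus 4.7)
The plan is to combine three ingredients: Theorem~\ref{t1} (every caching variable lies in $\{0,1\}$), the flow-conservation constraint $\alpha_{(A_k,A_k)_i}+\sum_{L\ne A_k}\alpha_{(A_k,L)_i}=1$ from (\ref{transit}), and the permutation symmetry of the symmetric Nash equilibrium (since all $\rho_k$, $\beta_k$, $c_{k_0}$ coincide). In such a symmetric equilibrium the strategy profile is invariant under any relabeling of $\{A_1,\ldots,A_K\}$, so the diagonal value $\alpha_{(A_j,A_j)_i}$, call it $\alpha^{\star}_i$, is the same for every $j$ and, by Theorem~\ref{t1}, lies in $\{0,1\}$; similarly all access prices $P_{A_j}^{(s)}$ collapse to a common $P^{\star(s)}$.

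I would then split on $\alpha^{\star}_i$. If $\alpha^{\star}_i=1$, the flow-conservation constraint combined with non-negativity immediately forces $\alpha_{(A_k,A_j)_i}=0$ for every $j\ne k$, and there is nothing further to prove.

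The substantive case is $\alpha^{\star}_i=0$, which I handle by contradiction: suppose there exist $k\ne j$ with $\alpha_{(A_k,A_j)_i}=1$. From (\ref{transit}), the coefficient of $\alpha_{(A_k,A_j)_i}$ in $U_C$ is $\sigma_{A_k}q_M(i)(P_C-P_{A_j}^{(s)})$, so the transit ICN's preference of $A_j$'s cache over its own cache and over the content provider forces $P_{A_j}^{(s)}\le\min\{c_{C_i},P_O^{(s)}\}$, and in particular $P_C>P_{A_j}^{(s)}$. From (\ref{access}), the coefficient of the same variable in $U_{A_j}$ is $\sigma_{A_k}q_M(i)(P_{A_j}^{(s)}-c_{A_i})$; at equilibrium $A_j$ is free to raise $P_{A_j}$ (and hence $P_{A_j}^{(s)}=P_{A_j}/(1+\beta)$) until the transit ICN's next-cheapest source kicks in, so rationality forces $P_{A_j}^{(s)}\ge c_{A_i}$. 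Chaining yields $P_C>c_{A_i}$, and the coefficient of $\alpha_{(A_j,A_j)_i}$ in (\ref{access}) becomes $\sigma_{A_j}q_M(i)(P_C-c_{A_i})>0$, so $A_j$'s unilateral best response is $\alpha_{(A_j,A_j)_i}=1$, contradicting $\alpha^{\star}_i=0$.

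The main obstacle I anticipate is tightening the inequality $P_{A_j}^{(s)}\ge c_{A_i}$ rigorously: $A_j$'s price $P_{A_j}$ also governs its own-user demand $\sigma_{A_j}$ through (\ref{newdem}), so raising $P_{A_j}$ to price the transit ICN out of $A_j$'s cache couples back into several terms of $U_{A_j}$, and one must check that such a deviation is strictly profitable whenever the per-unit serving margin $P_{A_j}^{(s)}-c_{A_i}$ is negative; the symmetry of the equilibrium (all other access ICNs holding their prices and caching decisions fixed) is what keeps this cross-coupling tractable and ultimately makes the deviation strictly beneficial.
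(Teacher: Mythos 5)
Your argument assembles the same ingredients as the paper's proof (binary caching variables, the routing constraint, the serving ICN's refusal to serve below cost, and equality of access caching costs in the symmetric case), but the pivotal link in your chain is asserted rather than proved. You claim that C's preference for $A_j$'s cache gives, ``in particular,'' $P_C > P_{A_j}^{(s)}$. It does not: preferring the cheapest source only yields $P_{A_j}^{(s)}\le\min\{c_{C_i},P_O^{(s)}\}$, and since the constraint in (\ref{transit}) forces C to route $A_k$'s demand somewhere once $A_k$ does not self-cache, C could in principle be routing at a negative margin, in which case $P_C > P_{A_j}^{(s)}$ fails and your chain to $P_C > c_{A_i}$ breaks. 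Closing this hole requires exactly the step the paper makes the centerpiece of its proof: C chooses $P_C$, so at equilibrium it will not operate with every margin negative. The paper runs the whole contradiction at C rather than at $A_j$: if $A_k$ does not self-cache then $P_C < c_{A_{k_i}} = c_{A_{j_i}}$, the serving ICN's rationality gives $P_{A_j}^{(s)} > c_{A_{j_i}}$, hence $P_C < c_{A_{j_i}} < P_{A_j}^{(s)} < \min\{c_{C_i},P_O^{(s)}\}$ and all of C's per-unit margins $(P_C-c_{C_i})$, $(P_C-P_{A_j}^{(s)})$, $(P_C-P_O^{(s)})$ are negative, contradicting C's choice of $P_C$ to secure nonnegative payoff. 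You need this rationality-of-$P_C$ argument anyway, and once you invoke it the paper's version of the contradiction is immediate, without detouring through $A_j$'s self-caching deviation.

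A second, smaller issue is your appeal to a permutation-invariant (symmetric) equilibrium to define a common diagonal value $\alpha^{\star}_i$ and common access prices. The theorem is stated for the symmetric-parameter game, not for symmetric equilibria, and symmetric games can admit asymmetric equilibria; your contradiction in the $\alpha^{\star}_i=0$ case uses $\alpha_{(A_j,A_j)_i}=0$ for the \emph{serving} ICN $A_j$, which you only get from that extra invariance assumption. The paper needs no such assumption: the only symmetry it uses is that the caching costs are exogenous and equal, $c_{A_{k_i}}=c_0/q_M(i)$ for all $k$, which follows directly from $c_{k_0}=c_0$. Finally, the obstacle you flag at the end (whether $A_j$ can profitably raise $P_{A_j}$ to enforce $P_{A_j}^{(s)}\ge c_{A_{j_i}}$) is handled in the paper not by a price deviation but by the simpler observation that the serving ICN would not accept traffic priced below its caching cost, so no coupling through the demand function (\ref{newdem}) needs to be analyzed.
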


\begin{proof}
According to equations (\ref{access}), when access ICN $A_j$ receives a request for content type $i$ and transit price is greater than its caching cost for that particular type of content (${P_C} \ge {c_{{A_j}_i}}$), access ICN $A_j$ decides to serve the requested content itself by adopting value 1 for caching parameter ${\alpha _{{A_j,A_j}_i}}$. Since ${\alpha _{{A_j,A_j}_i}} = 1$, then all the other caching parameters for content type $i$ would be equal to zero. On the other hand, if the transit price is less than the access ICN's caching cost for content type $i$ (${P_C} < {c_{{A_j}_i}}$), the access ICN $A_j$ will forward the request to the transit ICN $C$ to be served by choosing ${\alpha _{{A_j,A_j}_i}} = 0$. When the transit ICN $C$ receives the request, it should decide to either cache the content or forward it to the content provider or other access ICNs based on the payoff that it gains according to equations (\ref{transit}). Considering the Theorem \ref{t1} and the constraint ${\alpha _{{{\left( {{A_j},{A_j}} \right)}_i}}} + \sum\limits_{L \in \tilde K \cup \{ C,O\} ,L \ne {A_j}} {{\alpha _{{{\left( {{A_j},L} \right)}_i}}}}  = 1$, one of the caching parameters should be 1 and others should be 0. If ${c_{{C_i}}}$ or $P_O^{\left( s \right)}$ are the minimums among $\left\{ {{c_{{C_i}}},P_O^{\left( s \right)},P_k^{\left( s \right)}\forall k \in \tilde K,k \ne {A_j}} \right\}$, the transit ICN $C$ (${\alpha _{{A_j,C}_i}} = 1$) or the content provider $O$ (${\alpha _{{A_j,O}_i}} = 1$) will serve the request, respectively. Now assume that one of the access ICN's storage price $P_k^{\left( s \right)}$ is the minimum. If $P_k^{\left( s \right)}$ is the minimum, it means that $P_k^{\left( s \right)} < \left( {{c_{{C_i}}}\& P_O^{\left( s \right)}} \right)$. On the other hand, $P_k^{\left( s \right)} > {c_{{K_i}}}$ in order to the access ICN $K$ accepts the request; otherwise it does not accept the content to prevent from losing payoff. Therefore, ${c_{{K_i}}} < P_k^{\left( s \right)} < {c_{{C_i}}}$. Note that ${c_{{K_i}}} = {c_{{A_j}_i}}$ in symmetric scenario. Thus $P_k^{\left( s \right)}$ should adopt a value greater than access ICN $A_j$'s caching cost and less than transit ICN $C$'s caching cost for content type $i$ (${c_{{A_j}_i}} < P_k^{\left( s \right)} < {c_{{C_i}}}$). On the other hand, we know that ${P_C} < {c_{{A_j}_i}}$, therefore ${P_C} < {c_{{A_j}_i}} < P_k^{\left( s \right)} < \left( {{c_{{C_i}}}\& P_O^{\left( s \right)}} \right)$. It means that $\left( {{P_C} - {c_{{C_i}}}} \right)$, $\left( {{P_C} - P_k^{\left( s \right)}} \right)$ and $\left( {{P_C} - P_O^{\left( s \right)}} \right)$ are negative and that is a contradiction since the transit ICN $C$ is trying to choose ${P_C}$ in a way to get at least zero payoff. Thus $P_k^{\left( s \right)}$ can never be the minimum value among the others and accordingly ${\alpha _{{A_k,A_j}_i}} = 0$, $\forall k \ne j, \forall i$,  in the symmetric scenario case. 
\end{proof}

To better understand the implication of Theorem \ref{t2}, we can refer to Table \ref{table2} that shows the caching strategies at the equilibrium for asymmetric case with two access ICNs which are not identical. By assuming two identical access ICNs that have the same characteristics, the cases 2-7 can be removed and the table will reduce to just three cases and in all of these cases ${\alpha _{{A,B}_i}} = {\alpha _{{B,A}_i}}=0$. This table can also be extended for the generalized scenario with $K$ access ICNs.  

What the above theorem reveals is that in the symmetric case, access ICNs have no motivation to serve each other's users. This is not against the philosophy of the content centric network paradigm, since in this setup the access ICNs and also the transit ICN are capable of caching the requested content. Besides, this theorem is just for the \textit{Symmetric Scenario} and in the asymmetric setup the access ICNs are able to cache requests for users of other access ICNs.

Moreover, according to Theorem \ref{t2}, when the system is symmetric, we can add the following facts to our models.
\begin{equation} \label{eq:13}
\begin{array}{l}
{\alpha _{{A_j,C}_i}} = {\alpha _{{A_k,C}_i}},\begin{array}{*{20}{c}}
{}&{\forall i}
\end{array}\\
{\alpha _{{A_j,O}_i}} = {\alpha _{{A_k,O}_i}},\begin{array}{*{20}{c}}
{}&{\forall i}
\end{array}
\end{array}
\end{equation}

So using Theorem \ref{t2} and (\ref{eq:13}), we can reformulate our maximization problem described in (\ref{access})-(\ref{content}) for symmetric case. The maximization problems for access ICN $A_j$ can be expressed in equation (\ref{eq:14}) as follows:
\begin{equation} \label{eq:14}
\begin{array}{*{20}{l}}
{\mathop {\it{max} }\limits_{{P_{{A_j}}},{\alpha _{{A_j},{A_{{j_i}}}}}} {U_{{A_j}}} = {\sigma _{{A_j}}}\sum\limits_{i = 1}^M {{q_M}(i)} \left\{ \begin{array}{l}
{\alpha _{{A_j},{A_{{j_i}}}}}\left( {{P_C} - {c_{{A_{{j_i}}}}}} \right)\\
 + \left( {{P_{{A_j}}} - {P_C}} \right)
\end{array} \right\}}\\
{s.t.\begin{array}{*{20}{c}}
{}&{{\alpha _{{A_j},{A_{{j_i}}}}} \in \left\{ {0,1} \right\},{P_{{A_j}}} > 0}
\end{array}}
\end{array}
\end{equation}
The maximization problem for transit ICN $C$ can be defined in the following equation:

\begin{equation} \label{eq:16}
\begin{array}{*{20}{l}}
{\mathop {\it{max} }\limits_{\substack{\alpha _{{A_j},{C_{_i}}} \\ {\alpha _{{A_j},{O_{_i}}}} \\ {P_C}}} {U_C} = \sum\limits_{k = 1}^K {{\sigma _{{A_k}}}} .\sum\limits_{i = 1}^M {{q_M}(i)}\times\left\{{\begin{array}{*{20}{l}}
{{\alpha _{{A_j},{C_{_i}}}}\left( {{P_C} - {c_{{C_i}}}} \right) + }\\
{{\alpha _{{A_j},{O_{_i}}}}\left( {{P_C} - P_O^{\left( s \right)}} \right)}
\end{array}} \right\}}\\

{s.t.\begin{array}{*{20}{c}}
{}&{{\alpha _{{A_j},{A_{{j_i}}}}} + {\alpha _{{A_j},{C_{_i}}}} + {\alpha _{{A_j},{O_{_i}}}} = 1,}\\
{}&{{\alpha _{{A_j},{C_{_i}}}} \in \left\{ {0,1} \right\},
{\alpha _{{A_j},{O_{_i}}}} \in \left\{ {0,1} \right\},{P_C} > 0,}
\end{array}}
\end{array}
\end{equation}
and finally, the content provider maximization problem can be reformulated as the following
\begin{equation} \label{eq:17}
\begin{array}{*{20}{l}}
{\mathop {\it{max} }\limits_{\substack{P_O^{\left( c \right)}\\P_O^{\left( s \right)}}} {U_O} = \sum\limits_{k = 1}^K {{\sigma _{{A_k}}}} \left[ {P_O^{\left( c \right)} + \sum\limits_{i = 1}^M {{q_M}(i){\alpha _{{A_j},{O_{_i}}}}} \left( {P_O^{\left( s \right)} - {c_O}} \right)} \right]}\\
{s.t.\begin{array}{*{20}{c}}
{}&{}&{P_O^{\left( C \right)} > 0,P_O^{\left( s \right)} > {c_O}}
\end{array}}
\end{array}
\end{equation}

As mentioned in Theorem \ref{t1}, the caching parameters still take on binary values. Moreover, since the content popularity probability function ${q_M}\left( i \right)$ is monotonically decreasing; according to (\ref{eq:2}), the access and transit ICNs' caching costs are monotonically increasing depending on the content type. For access ICN $A_j$, the caching parameter ${\alpha _{{A_j,A_j}_i}}$ adopts value 1 when the transit price ${P_C}$ is greater than the access ICN caching cost ${c_{{A_j}_i}}$. Hence, if ${\alpha _{{A_j,A_j}_i}}$ is 1 for content type $i$, it would also be 1 for content type $i-1$. It means that if access ICN decides to cache the content type $i$, it will cache all the other content that are more popular than it. So, there would be a caching threshold index for the number of content type that access ICN is willing to locally store. We denote the optimum caching threshold index by ${Th_A}_j$ for access ICN $A_j$. In the symmetric scenario, the caching threshold indices ${Th_A}_j$ are identical for all access ICNs, so we consider $Th$ as the caching threshold index for all the access ICNs. Since caching content types $Th+1$ to $M$ is not beneficial for access ICNs, they will forward these to the transit ICN $C$ to be served. The transit ICN $C$ should decide to either serve the content itself or forward it to somewhere else. In the symmetric case, as mentioned in Theorem \ref{t2}, in case it decides not to serve the content itself, it can forward it to the content provider $O$. According to (\ref{eq:16}), if the content provider storage price $P_O^{\left( s \right)}$ is greater than the transit ICN $C$ caching cost ${c_{{C_i}}}$, the caching parameter ${\alpha _{{A_j,C}_i}}$ adopts value 1 and the transit ICN $C$ caches the content type $i$. On the other hand, if $P_O^{\left( s \right)}$ is less than ${c_{{C_i}}}$, the caching parameter ${\alpha _{{A_j,C}_i}}$ will be 0 and ${\alpha _{{A_j,O}_i}}$ adopts value 1. In this case, the content provider $O$ will take care of the request for content type $i$. As discussed, for access ICNs, the transit ICN also can have a caching threshold index. It means that if it caches content type $i$, it would also be able to cache the content type $i-1$. So there would be a caching threshold index for the number of content type that the transit ICN $C$ is willing to locally store. We denote this caching threshold index by $Th_C$. So the transit ICN $C$ will serve the content with popularity index $Th+1$ to $Th_C$ and content with popularity index greater than $Th_C$ will be served by the content provider $O$. We can summarize these new parameters in (\ref{eq:18}) and (\ref{eq:19}).

\begin{equation} \label{eq:18}
{\alpha _{{A_j},{A_{{j_i}}}}} = \left\{ {\begin{array}{*{20}{c}}
1&{i \le Th}\\
0&{i > Th}
\end{array}} \right.,\forall j \in \left\{ {1,...,K} \right\}
\end{equation}

\begin{equation} \label{eq:19}
\begin{array}{*{20}{c}}
{{\alpha _{{A_j},{C_{{j_i}}}}} = \left\{ {\begin{array}{*{20}{c}}
1&{Th + 1 \le i \le T{h_C}}\\
0&{i \le Th}
\end{array},\forall j \in \left\{ {1,...,K} \right\}} \right.}\\
{{\alpha _{{A_j},{O_{{j_i}}}}} = \left\{ {\begin{array}{*{20}{c}}
1&{T{h_C} + 1 \le i \le M}\\
0&{i \le T{h_C}}
\end{array},\forall j \in \left\{ {1,...,K} \right\}} \right.}
\end{array}
\end{equation}
Thus, all the caching variables ${\alpha}$ will be replaced by two caching threshold indices $Th$ and $Th_C$. By (\ref{eq:18}) and (\ref{eq:19}), the problem set of (\ref{eq:14})-(\ref{eq:17}) can be rearranged using new parameters $Th$ and $Th_C$. The new maximization problem for access ICN $A_j$ is given as the following:

\begin{equation} \label{eq:20}
\begin{array}{*{20}{l}}
{\mathop {\it{max} }\limits_{{P_{{A_j}}},Th} {U_{{A_j}}} = {\sigma _{{A_j}}}\left[ {{P_{{A_j}}} - Th.{c_0} - {P_C}\sum\limits_{i = Th + 1}^{M + 1} {{q_M}(i)} } \right]}\\
{s.t.\begin{array}{*{20}{c}}
{}&{0 \le Th \le M,}
\end{array}{P_{{A_j}}} > 0}
\end{array}
\end{equation}
%
The transit ICN $C$ maximizes its utility function as the following:

\begin{equation} \label{eq:22}
\begin{array}{l}
\mathop {\it{max} }\limits_{{P_C},T{h_C}} {U_C} = \left( {K - K{\rho _0}P_O^{\left( c \right)}} \right)\left\{ \begin{array}{l}
{P_C}\sum\limits_{i = Th + 1}^{M + 1} {{q_M}(i)}
 \\- \left( {T{h_C} - Th} \right){c_{{C_0}}}
 \\- P_O^{\left( s \right)}\sum\limits_{i = T{h_C} + 1}^{M + 1} {{q_M}(i)} 
\end{array} \right\}\\
s.t.\begin{array}{*{20}{c}}
{}&{Th \le T{h_C} \le M,}
\end{array}{P_C} > 0
\end{array}
\end{equation}
And the content provider $O$ maximization problem is formulated using new parameters in (\ref{eq:23}) 

\begin{equation} \label{eq:23}
\begin{array}{l}
\mathop {\it{max} }\limits_{P_O^{\left( c \right)},P_O^{\left( s \right)}} {U_O} = \left( {K - K{\rho _0}P_O^{\left( c \right)}} \right)\left\{ \begin{array}{l}
P_O^{\left( c \right)} +\left( {P_O^{\left( s \right)} - {c_O}} \right).\\
\sum\limits_{i = T{h_C} + 1}^{M + 1} {{q_M}(i)} 
\end{array} \right\}\\
s.t.\begin{array}{*{20}{c}}
{}&{}&{P_O^{\left( C \right)} > 0,P_O^{\left( s \right)} > {c_O}}
\end{array}
\end{array}
\end{equation}
Note that in the above equations ${q_M}\left( 0 \right) = {q_M}\left( {M + 1} \right) = 0$.

The problem of joint caching and pricing strategies for the case of symmetric ICNs can be decomposed into two independent caching and pricing optimization problems. The caching problem is dealing with the parameters that affect the caching process and is stated as follows:

\begin{description}
\item[$\bullet$ Caching Problem:] 
\end{description}

\begin{equation} \label{eq:24}
\begin{array}{l}
\mathop {\it {min} } \limits_{Th} Th.{c_0} + {P_C}\sum\limits_{i = Th + 1}^{M + 1} {{q_M}(i)} \\
\begin{array}{*{20}{c}}
{s.t.}&{0 \le Th \le M}
\end{array}
\end{array}
\end{equation}

\begin{equation} \label{eq:25}
\begin{array}{l}
\mathop {\it{max} }\limits_{{P_C},T{h_C}} {P_C}\sum\limits_{i = Th + 1}^{M + 1} {{q_M}(i)}  + \left( {Th - T{h_C}} \right){c_{{C_0}}}\\
\begin{array}{*{20}{c}}
\end{array} \qquad - P_O^{\left( s \right)}\sum\limits_{i = T{h_C} + 1}^{M + 1} {{q_M}(i)} \\
s.t.\begin{array}{*{20}{c}}
{}&{Th \le T{h_C} \le M,{P_C} > 0}
\end{array}
\end{array}
\end{equation}

\begin{equation} \label{eq:26}
\begin{array}{l}
\mathop {\it{max} }\limits_{P_O^{\left( s \right)}} \left( {P_O^{\left( s \right)} - {c_O}} \right).\sum\limits_{i = T{h_C} + 1}^{M + 1} {{q_M}(i)} \\
\begin{array}{*{20}{c}}
{s.t.}&{P_O^{\left( s \right)} > }
\end{array}{c_O}
\end{array}
\end{equation}
The outcome of this problem is a 4-tuples $\left( {T{h^*},T{h_C}^*,{P_C}^*,P{{_O^{\left( s \right)}}^*}} \right)$. The pricing problem is defined in (\ref{eq:27}) and (\ref{eq:29}) by substituting the 4-tuple resulting from the caching problem.

\begin{description}
\item[$\bullet$ Pricing Problem:] 
\end{description}

\begin{equation} \label{eq:27}
\begin{array}{l}
\mathop {\it{max} }\limits_{{P_A}_j} {{U_A}}_j = {\sigma _A}_j\left[ {{P_A}_j - T{h^*}.{c_0} - {P_C}^*\sum\limits_{i = T{h^*} + 1}^{M + 1} {{q_M}(i)} } \right]\\
s.t.\begin{array}{*{20}{c}}
{}&{{P_A}_j > 0}
\end{array}
\end{array}
\end{equation}
\begin{equation} \label{eq:29}
\begin{array}{l}
\mathop {\it{max} }\limits_{P_O^{\left( c \right)}} {U_O} = \left( {K - K{\rho _0}P_O^{\left( c \right)}} \right)\left\{ \begin{array}{l}
P_O^{\left( c \right)} +\sum\limits_{i = T{h_C}^* + 1}^{M + 1} {{q_M}(i)}\\
\times \left( {P{{_O^{\left( s \right)}}^*} - {c_O}} \right)
\end{array} \right\}\\
s.t.\begin{array}{*{20}{c}}
{}&{P_O^{\left( c \right)} > 0}
\end{array}
\end{array}
\end{equation}
The K+1-tuple $\left( {{P_A}_1^*,...,{P_A}_K^*,P{{_O^{\left( c \right)}}^*}} \right)$ is the outcome of the pricing problem. The \textit{NE} of the joint caching and pricing problem is $\left( {T{h^*},T{h_C}^*, {P_C}^*,P{{_O^{\left( s \right)}}^*},{P_A}_1^*,...,{P_A}_K^*,P{{_O^{\left( c \right)}}^*}} \right)$.

\begin{theorem}
\label{t3}
The caching problem introduced above is a two player matrix game between transit ICN $C$ and content provider $O$. 
\end{theorem}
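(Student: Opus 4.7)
The plan is to show that the caching system (24)--(26), although written as three separate optimization problems involving access ICNs, transit ICN $C$, and content provider $O$, collapses into a game between just $C$ and $O$ whose strategy sets are finite, which is the defining structure of a matrix game. The main device I would use is that every continuous price variable in these problems admits a step-function best response indexed by an integer threshold.

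First, I would eliminate the access ICNs as strategic players. The objective in (24) is linear in $Th$ with forward difference $c_0 - P_C\, q_M(Th+1)$. Since $q_M(\cdot)$ is strictly decreasing, this difference is strictly increasing in $Th$ and so changes sign at most once, giving a unique best response $Th^*(P_C) = \max\{k \in \{0,\dots,M\} : P_C\, q_M(k) > c_0\}$. Thus the access ICN's caching threshold is a deterministic step function of $P_C$ taking at most $M{+}1$ distinct values, and the access ICNs drop out of the strategic interaction.

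Second, I would apply the same device to both remaining players. Substituting $Th^*(P_C)$ into (25), $C$'s utility is affine-increasing in $P_C$ with positive coefficient $\sum_{i=Th+1}^{M+1} q_M(i)$, so within each level-set $\{P_C : Th^*(P_C) = Th\}$ it attains its supremum at the right end-point $c_0 / q_M(Th+1)$. Hence $C$'s continuous strategy $P_C$ collapses without loss of payoff to the discrete choice $Th \in \{0,\dots,M\}$. An identical argument applied to $Th_C$ in (25) and to $P_O^{(s)}$ in (26) shows that $O$'s optimal $P_O^{(s)}$ within each interval that induces a fixed $Th_C$ is the upper end-point $c_{C_0}/q_M(Th_C+1)$, so $O$'s strategy collapses to the discrete choice $Th_C \in \{0,\dots,M\}$. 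Plugging these boundary prices back into (25)--(26), both players' utilities become functions $U_C(Th,Th_C)$ and $U_O(Th,Th_C)$ on the finite grid $\{0,\dots,M\} \times \{0,\dots,M\}$ (with the feasibility constraint $Th \le Th_C$), which is precisely the payoff structure of a two-player matrix game.

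The main obstacle is the double appearance of $Th_C$: it is a decision variable in transit $C$'s maximization (25), yet in the reduction above it is also what content provider $O$ effectively picks via $P_O^{(s)}$. I would resolve this by noting that the first-order condition for $Th_C$ in (25) depends only on $P_O^{(s)}$ and $c_{C_0}$, so each $P_O^{(s)}$ in a given interval pins down a unique best-response $Th_C$; the ``choice'' being made by $C$ and the ``choice'' being induced by $O$ must therefore coincide in any Nash equilibrium, and $Th_C$ can unambiguously be regarded as $O$'s discrete strategy in the final matrix game.
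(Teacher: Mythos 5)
Your argument is correct and follows essentially the same route as the paper: the access ICNs' threshold is a step-function follower response to $P_C$, and linearity of the utilities in $P_C$ and $P_O^{(s)}$ pushes each price to the top of the interval inducing a given threshold (the paper writes these as $c_{A_{Th+1}}-\varepsilon$ and $c_{C_{Th_C+1}}-\varepsilon$, which also repairs your ``supremum attained at the right end-point'' wording, since on a half-open interval the supremum is only approached), so both remaining players effectively choose from finite sets indexed by $Th$ and $Th_C$ and the caching problem is a two-player matrix game. The only cosmetic slip is calling the objective in (24) ``linear in $Th$''; it is discretely convex with monotone forward differences, which is exactly the property your step-function best response actually uses.
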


\begin{proof}
According to (\ref{access}), when the transit price for ICN $C$ is greater than access ICN's caching cost for content type $i$ $\left( {{P_C} \ge {c_{{A_j}_i}}} \right)$, the access  ICN caches all the content which are more popular than content type $i$. Therefore, when ${c_{{A_j}_i}} \le {P_C} < {c_{{A_j}_{i + 1}}}$, the optimum caching threshold index chosen by access ICN will be $T{h^*} = i$. On the other hand, since the utility function of the transit ICN $C$ has a linear relationship with the transit price $P_C$ , the transit ICN will choose the maximum value possible that is ${c_{{A_j}_{i + 1}}} - \varepsilon$ ($\varepsilon$ is a very small value). Thus, the transit ICN $C$ also can adopt its actions from a discrete set. There is a caching threshold index $Th$ corresponding with each transit price chosen by ICN $C$. It shows that the transit ICN is the leader in its relationship with the access ICNs and its action is $\left( {{P_C},Th} \right)$ from a set of $M+1$ feasible choices. The relationship between the transit ICN $C$ and the content provider is also a leader follower game. Depending on the storage price $P_O^{\left( s \right)}$, the transit ICN $C$ might forward some part of demands to the content provider $O$ to be served. If the access ICN decides to cache the content more popular than content type $Th$, the rest of the content should be forwarded to transit ICN. Therefore, the content type with index $Th+1$ to $M$ is going to be served in either the transit ICN or the content provider. When ${c_{{C_j}}} \le P_O^{\left( s \right)} < {c_{{C_{j + 1}}}}$, the optimum caching threshold index chosen by the transit ICN $C$ will be $T{h_C}^* = j$. Since, the utility of the content provider $O$ has a linear relationship with the storage price $P_O^{\left( s \right)}$; the content provider will pick the maximum value possible for the storage price that is ${c_{{C_{j + 1}}}} - \varepsilon$. Thus, for every content provider storage price, there is a corresponding caching threshold index chosen by the transit ICN $C$. Since both the transit ICN $C$ and the content provider $O$ have a limited set of discrete actions, the problem introduced in (\ref{eq:24})-(\ref{eq:26}) is a matrix game between the transit ICN $C$ and the content provider $O$ when the transit ICN action is the transit price ${P_C}$ and content provider action is the storage price $P_O^{\left( s \right)}$. The access ICNs cannot change the results and they just follow the transit ICN and their actions. 
\end{proof}
By Theorem \ref{t3}, we can discard (\ref{eq:24}) and solve equations (\ref{eq:25}) and (\ref{eq:26}) jointly to find the integer thresholds $Th$ and $Th_C$. Note that $P_C$ and $P_O^{\left( s \right)}$ are functions of $Th$ and $Th_C$, respectively as follows:

\begin{equation} \label{eq:30}
{P_C}\left( {Th} \right) = \left\{ \begin{array}{l}
{c_{{A_j}_{Th + 1}}} - \varepsilon \begin{array}{*{20}{c}}
{}&{0 \le Th \le M - 1}
\end{array}\\
{c_{{A_j}_{Th}}} + \varepsilon \begin{array}{*{20}{c}}
{}&{Th = M}
\end{array}
\end{array} \right.
\end{equation}

\begin{equation} \label{eq:31}
P_O^{\left( s \right)}\left( {T{h_C}} \right) = \left\{ \begin{array}{l}
{c_{{C_{Th + 1}}}} - \varepsilon \begin{array}{*{20}{c}}
{}&{0 \le T{h_C} \le M - 1}
\end{array}\\
{c_{{C_{Th}}}} + \varepsilon \begin{array}{*{20}{c}}
{}&{T{h_C} = M}
\end{array}
\end{array} \right.
\end{equation}


\begin{proposition}
\label{p2}
$f\left( {Th} \right) = {P_C}\left( {Th} \right)\sum\limits_{i = Th + 1}^{M + 1} {{q_M}(i)}  + Th{c_{{C_0}}}$ and $g\left( {T{h_C}} \right) = \left( {P_O^{\left( s \right)}\left( {T{h_C}} \right) - {c_O}} \right).$ $\sum\limits_{i = T{h_C} + 1}^{M + 1} {{q_M}(i)}$ are concave sequences and have a unique maximum.
\end{proposition}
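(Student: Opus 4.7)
The plan is to establish strict concavity of the discrete sequences $f(Th)$ and $g(Th_C)$ by showing that their first forward differences are strictly decreasing on the finite integer interval $\{0,1,\ldots,M\}$, and then invoke the standard fact that a strictly concave sequence on a finite integer domain has a unique maximizer. Because $f$ and $g$ share the same underlying structure---a ``price'' inversely proportional to a Zipf probability multiplied by a Zipf tail sum, plus a term that is either linear in the index or itself concave---I would fully treat $f$ and then apply the same template to $g$.

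For $f$, the first step is to substitute the Zipf form $q_M(i)=\Omega\, i^{-\gamma}$ together with $c_{A_i}=(c_{A_0}/\Omega)\,i^{\gamma}$ into the piecewise definition of $P_C$ in (\ref{eq:30}), suppressing the infinitesimal $\varepsilon$ (which does not affect concavity). Letting $n=Th+1$, this reduces the nontrivial part of $f$ to $h(n):=n^{\gamma}\sum_{i=n}^{M} i^{-\gamma}$; the additive $Th\cdot c_{C_0}$ is linear and preserves concavity. Using the identity $S(n)=S(n+1)+n^{-\gamma}$ for $S(n):=\sum_{i=n}^{M} i^{-\gamma}$, the first difference collapses to
\begin{equation*}
\Delta h(n) \;=\; \bigl[(n+1)^{\gamma}-n^{\gamma}\bigr]\,S(n+1)\;-\;1.
\end{equation*}
Since $0\le\gamma\le 1$, the map $x\mapsto x^{\gamma}$ is concave on $(0,\infty)$, so $(n+1)^{\gamma}-n^{\gamma}$ is a positive, strictly decreasing sequence in $n$, and $S(n+1)$ is manifestly positive and strictly decreasing. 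The pointwise product of two positive strictly decreasing sequences is strictly decreasing, and subtracting the constant $1$ preserves this, so $\Delta h$ is strictly decreasing, hence $h$ (and therefore $f$) is strictly concave. For $g$, the same manipulation applied to $P_{O}^{(s)}(Th_C)\cdot Q(Th_C)$ (with $c_{C_0}$ in place of $c_{A_0}$) yields a strictly concave piece; the remaining $-c_O\sum_{i=Th_C+1}^{M} q_M(i)$ is concave because the Zipf tail is convex in its lower endpoint (its forward difference $-q_M(Th_C+1)$ is increasing, as $q_M$ is decreasing), so multiplying by $-c_O<0$ gives concavity. The sum of two concave sequences is concave, establishing the claim for $g$.

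Strict concavity on $\{0,\ldots,M\}$ then forces the maximizer to be unique: it is the unique index at which $\Delta f$ (respectively $\Delta g$) switches sign from positive to non-positive. I expect the main obstacle to be the middle step, showing that $\Delta h$ is strictly decreasing. A naive expansion exhibits $\Delta h$ as a difference of two positive quantities that both move in the same direction, which is inconclusive; the key trick is the re-expression through $S(n)=S(n+1)+n^{-\gamma}$, which rewrites $\Delta h$ as a product of two positive decreasing sequences minus a constant, making monotonicity transparent. The degenerate case $\gamma=0$ (uniform popularity, constant caching costs) collapses $f$ and $g$ to affine functions; strict concavity fails there, but weak concavity still holds and a unique maximum can be recovered by the usual tie-breaking, or by simply restricting attention to $\gamma\in(0,1]$ as in the surrounding analysis.
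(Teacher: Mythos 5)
Your proof is correct and follows the same overall strategy as the paper's (reduce the nontrivial part of $f$ to $c_0\,(Th+1)^{\gamma}\sum_{i=Th+1}^{M} i^{-\gamma}$ plus a linear term after the Zipf normalization cancels, check discrete concavity via differences, and for $g$ split off $-c_O\sum_{i=Th_C+1}^{M+1}q_M(i)$, which is concave because $q_M$ is decreasing, then add concave sequences), but your key step for the main piece is genuinely different. The paper verifies the second-difference condition directly: after rearranging it isolates $\varphi_\gamma(Th)=(Th+2)^{\gamma}-2(Th+1)^{\gamma}+Th^{\gamma}$, shows $\varphi_\gamma<0$ from $\left(\frac{Th}{Th+1}\right)^{\gamma}+\left(\frac{Th+2}{Th+1}\right)^{\gamma}<2$ (concavity of $x^{\gamma}$), and concludes because the left side of the required inequality is negative while the right side $1-\left(\frac{Th}{Th+1}\right)^{\gamma}$ is positive. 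You instead telescope the tail sum to obtain the exact first difference $\Delta h(n)=\bigl[(n+1)^{\gamma}-n^{\gamma}\bigr]S(n+1)-1$ and argue it is strictly decreasing as a product of positive decreasing factors minus a constant; this is an equivalent but cleaner route that makes the marginal-gain structure (and hence the location of the optimal threshold) transparent. Two caveats, which affect the paper as much as you: at $\gamma=1$ the factor $(n+1)^{\gamma}-n^{\gamma}$ is constant rather than strictly decreasing, so the product step should be phrased as a positive non-increasing factor times a positive strictly decreasing one; and strict concavity of a discrete sequence only forces the maximizer set to be one index or two adjacent indices (when some first difference is exactly zero), so uniqueness of the maximizer holds up to that non-generic tie. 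Your observation about $\gamma=0$ is fair and in fact slightly more careful than the paper, whose strict inequality degenerates there ($f$ becomes affine, so only weak concavity holds).
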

The proof of the above proposition can be found in \textit{Appendix A}.



\begin{theorem}
\label{t4}
The symmetric joint caching and pricing game has a unique \textit{NE}.
\end{theorem}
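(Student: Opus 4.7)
The plan is to exploit the decomposition of the joint game into the caching subproblem (\ref{eq:24})--(\ref{eq:26}) and the pricing subproblem (\ref{eq:27}),(\ref{eq:29}) already justified earlier in Section VI, and argue uniqueness for each piece separately. Once both subproblems have unique solutions, the $(K+3)$-tuple $(Th^{*},Th_{C}^{*},P_{C}^{*},P_{O}^{(s)*},P_{A_{1}}^{*},\ldots,P_{A_{K}}^{*},P_{O}^{(c)*})$ is determined uniquely and is, by construction, the unique NE of the symmetric joint game.

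For the caching subproblem, I would invoke Theorem \ref{t3} and Proposition \ref{p2} directly. Proposition \ref{p2} says that the integer sequences $f(Th)$ and $g(Th_{C})$ are strictly concave and therefore each attains its maximum at a unique index in $\{0,1,\ldots,M\}$; this fixes $Th^{*}$ and $Th_{C}^{*}$ uniquely. Substituting into the response rules (\ref{eq:30})--(\ref{eq:31}) (where $\varepsilon$ is an arbitrarily small positive constant fixed in advance) then determines $P_{C}^{*}$ and $P_{O}^{(s)*}$ uniquely. A short sanity step is to check the access ICN's caching-only minimization (\ref{eq:24}): given $P_{C}^{*}$, the value of $Th$ that minimizes $Th\, c_{0}+P_{C}^{*}\sum_{i=Th+1}^{M+1}q_{M}(i)$ coincides with the $Th^{*}$ picked above, because by the construction of $P_{C}(Th)$ this single index is precisely the threshold at which the marginal cost of caching one more item crosses $P_{C}^{*}$.

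For the pricing subproblem, the idea is to show that each player's utility, viewed as a function of its own price with the other prices and the caching 4-tuple fixed, is strictly concave, so that the best response is single valued, and that the resulting system of first-order conditions is an affine system with a unique solution. Indeed, plugging Definition \ref{t3} (the linear demand (\ref{newdem})) into (\ref{eq:27}), $U_{A_{j}}$ becomes a quadratic in $P_{A_{j}}$ with second-derivative $-2\rho$, hence strictly concave; likewise $U_{O}$ in (\ref{eq:29}) is a quadratic in $P_{O}^{(c)}$ with second-derivative $-2K\rho_{0}$. Writing out the $K+1$ best-response equations and exploiting symmetry ($\rho_{k}=\rho$, $\beta_{k}=\beta$ for all $k$) collapses the access-ICN equations into a single equation $P_{A_{j}}^{*}=P_{A}^{*}$ coupled linearly with the content provider's equation in $P_{O}^{(c)*}$, giving a $2\times 2$ triangular/diagonally-dominant linear system whose invertibility follows from $\rho,\rho_{0}>0$ and $K\geq 2$.

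The main obstacle I expect is the bookkeeping step that shows the coupled pricing best-response system has a unique fixed point rather than just strict concavity of each utility in its own price: strict concavity on its own only guarantees a well-defined best-response map, and one still needs to verify that the Jacobian of the joint best-response system (equivalently, the linear system obtained by stacking the first-order conditions) is nonsingular. Under symmetry this reduces to checking that a small explicit matrix with diagonal entries proportional to $-2\rho$ and $-2K\rho_{0}$ and off-diagonal entries coming from the demand coupling $\tfrac{1}{K-1}\rho P_{A_{k}}$ and $\rho_{0}P_{O}^{(c)}$ is invertible; this is where a diagonal-dominance / strict-contraction argument is needed. Once that step is in place, the uniqueness in the pricing block, combined with the earlier uniqueness in the caching block, yields the unique NE claimed by the theorem.
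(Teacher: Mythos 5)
Your proposal follows the same decomposition as the paper: uniqueness of $Th^{*}$ and $Th_{C}^{*}$ from Proposition \ref{p2} together with the matrix-game structure of the caching block, prices $P_{C}^{*}$, $P_{O}^{(s)*}$ from (\ref{eq:30})--(\ref{eq:31}), and then uniqueness of the pricing block from strict concavity of the quadratic utilities in (\ref{eq:27}) and (\ref{eq:29}). The only substantive difference is how the pricing block is closed out, and here the obstacle you flag largely dissolves: in the symmetric model the total demand entering the content provider's objective is $K-K\rho_{0}P_{O}^{(c)}$, which does not involve the access prices at all (the $\rho P_{A_k}$ terms cancel when the symmetric demands are summed), so the first-order conditions are triangular rather than merely diagonally dominant --- $P_{O}^{(c)*}$ is pinned down by a standalone concave quadratic and each $P_{A_j}^{*}$ then follows in the closed form (\ref{eq:32}); this is exactly how the paper certifies uniqueness, whereas your stacked-system argument (diagonal $-2\rho$ versus off-diagonal entries $\rho/(K-1)$ summing to $\rho$) is a valid, slightly more general alternative that also rules out asymmetric pricing solutions explicitly. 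One small omission on the caching side: the feasible set for the transit threshold is $Th^{*}\le Th_{C}\le M$, so $Th_{C}^{*}$ is the \emph{constrained} maximizer of $g$ --- equal to the unconstrained maximizer when that exceeds $Th^{*}$ and clamped to $Th^{*}$ otherwise --- a case distinction the paper states explicitly; concavity of $g$ on the restricted range means this does not threaten uniqueness, but your claim that Proposition \ref{p2} "fixes $Th_{C}^{*}$" over $\{0,\dots,M\}$ should be amended accordingly.
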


\begin{proof}
The solution of the caching and pricing problems in (\ref{eq:24})-(\ref{eq:29}) is the \textit{NE}. On the one hand, the caching problem set is like a leader-follower game and the transit ICN $C$ (leader) maximizes (\ref{eq:25}) and the content provider $O$ (follower) maximizes (\ref{eq:26}). Since both of them are concave sequences based on Proposition \ref{p2}, they have only one maximum in their feasible sets. By (\ref{eq:25}), $T{h^*}$ can be defined as the unique optimum value for the access ICN caching threshold index and by (\ref{eq:26}), $T{h_C}^*$ can be defined as the optimum caching threshold index of the transit ICN $C$ that should always be greater or equal than $T{h^*}$. Assume that $T{h_{{C_{\max }}}}$ is the value that maximizes (\ref{eq:26}). If $T{h_C}^* < T{h_{{C_{\max }}}}$ then $T{h_C}^* = T{h_{{C_{\max }}}}$ and if $T{h_C}^* \ge T{h_{{C_{\max }}}}$ then $T{h_C}^* = T{h^*}$. By finding the first set of parameters and substituting them in (\ref{eq:27}) and (\ref{eq:29}), we can find the second set of parameters. Since these are concave quadratic functions the problem has the following unique solution
\begin{equation} \label{eq:32}
\begin{array}{l}
P{{_O^{\left( c \right)}}^*} = \it{max} \left\{ {0,\frac{{1 - {\rho _0}\left( {P{{_O^{\left( s \right)}}^*} - {c_O}} \right)\sum\limits_{i = T{h_C}^* + 1}^{M + 1} {{q_M}(i)} }}{{2{\rho _0}}}} \right\},\\
{P_A}^ *  = {P_B}^ *  = \frac{1}{\rho }\left[ \begin{array}{l}
\rho \left( {T{h^*}.{c_0} + {P_C}^*\sum\limits_{i = T{h^*} + 1}^{M + 1} {{q_M}(i)} } \right)\\+
1 - {\rho _0}P{{_O^{\left( c \right)}}^*}
\end{array} \right].
\end{array}
\end{equation}
Note that ${P_A}_j^ *$'s are always greater than zero. Hence, as we have unique set of results for $T{h^*}$ and $T{h_C}^*$, so ${P_A}_j^ *$'s and $P{{_O^{\left( c \right)}}^*}$ are also unique. Therefore the \textit{NE} exists and is unique.
\end{proof}

As the analytical results show, the \textit{NE} for the symmetric case is independent of number of access ICNs. So, for the numerical results section, we consider the scenario with only two access ICNs.


\section{Numerical Results}

We consider the interaction among two symmetric access ICNs, one transit ICN and a content provider who are competing to maximize their utilities. In this scenario, the reflective coefficients of price's influence on users' demands ${\rho _A}$, ${\rho _B}$ and ${\rho _0}$ are set identically to 0.1. These parameters model the sensitivity of the demands to an increase in the prices. The scaling parameter between network price and storage price is set to ${\beta _A} = {\beta _B} = 10$, i.e., the storage price is an order of magnitude less than the network price. There are $M=100$ different types of content in this network which are randomly requested by users according to a generalized Zipf distribution. Fig. \ref{fig:accessC2} shows how the access ICN caching cost varies for different types of content and different values of Zipf’s factor $\gamma$ when initial caching cost $c_0=1$. As Zipf's factor $\gamma$ is increasing, the distribution of content requested is getting skewed and according to (\ref{eq:2}), the caching costs of more popular content decrease while the caching costs of less popular content increase. Note that the transit ICN $C$ possibly has (on average) access to cheaper caches. We denote the ratio of transit ICN caching cost to access ICN caching cost by $R$. We assume that the content provider cost ($c_O$) for all types of content is identical.

\begin{figure}
\hspace{-0.7cm}
  \includegraphics[width=10cm,height=7.3cm]{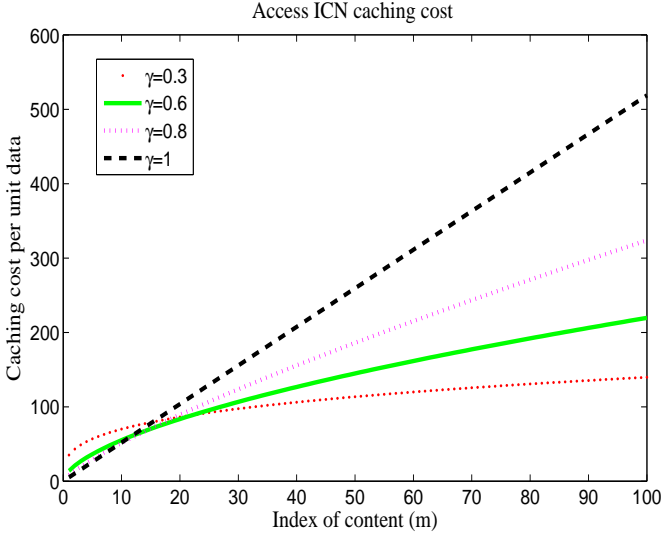}
  \caption{Access ICN caching costs vs index of content for different Zipf’s factor $\gamma$.}
  \label{fig:accessC2}
\end{figure}

\begin{figure}
\hspace{-0.7cm}
  \includegraphics[width=10cm,height=7.3cm]{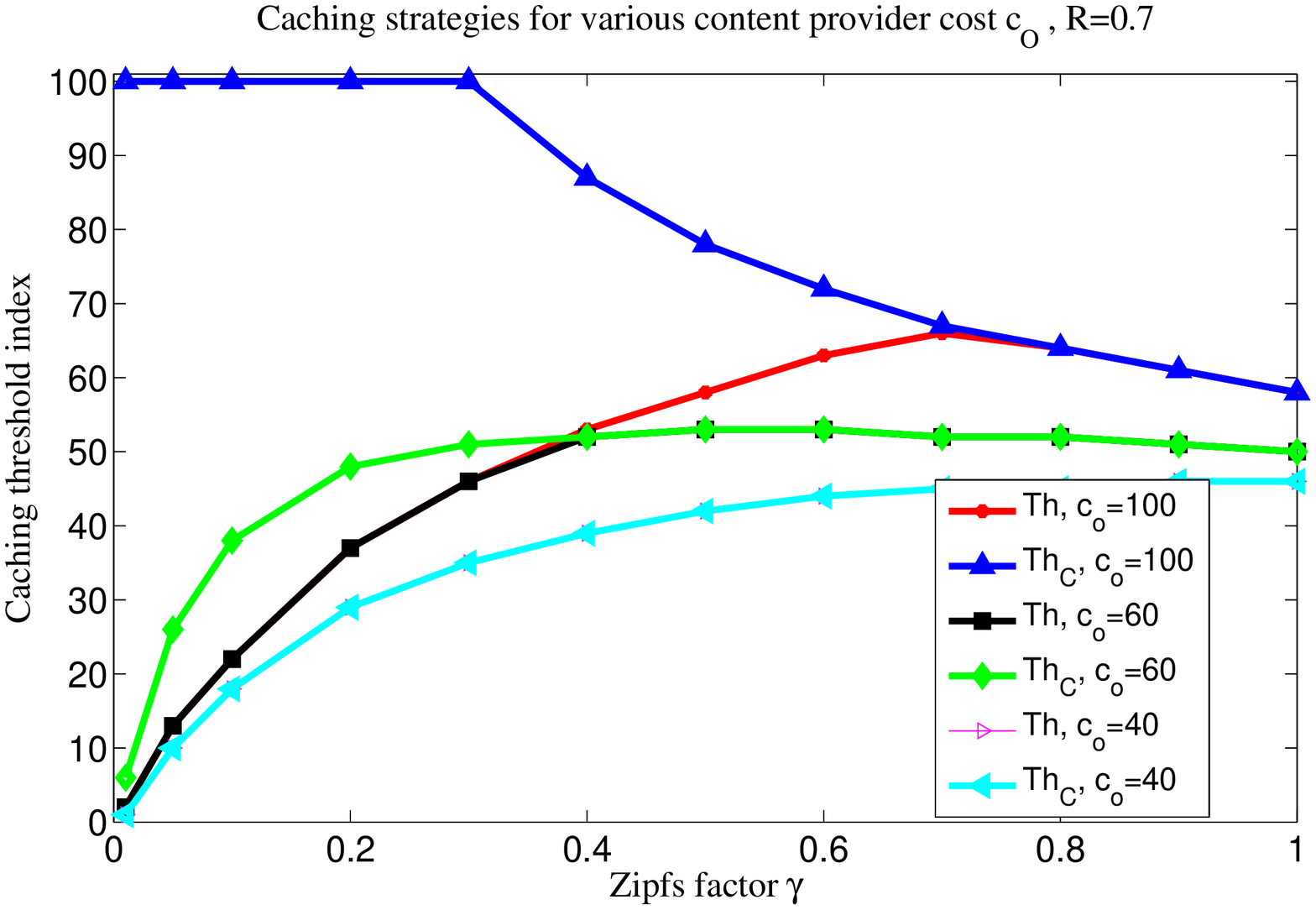}
  \caption{Access ICN/transit ICN Caching threshold vs Zipf’s factor $\gamma$ for various content provider cost $c_O$ when $R=0.7$. }
  \label{fig:threshold7}
\end{figure}

\begin{figure}
\hspace{-0.7cm}
  \includegraphics[width=10cm,height=7.3cm]{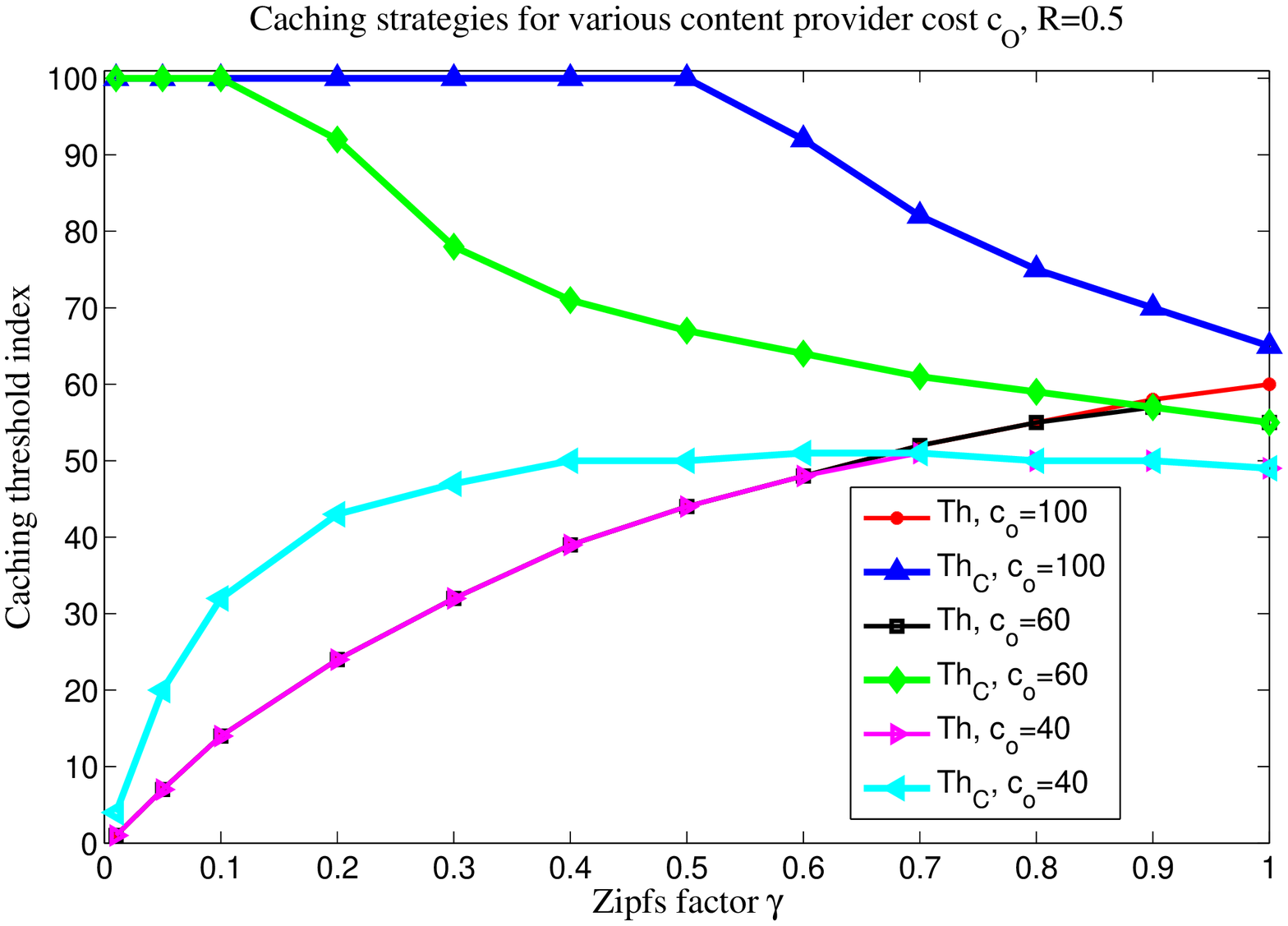}
  \caption{Access ICN/transit ICN Caching threshold vs Zipf’s factor $\gamma$ for various content provider cost $c_O$ when $R=0.5$. }
  \label{fig:threshold5}
\end{figure}

\begin{figure}[t]
\hspace{-0.7cm}
  \includegraphics[width=10cm,height=7.3cm]{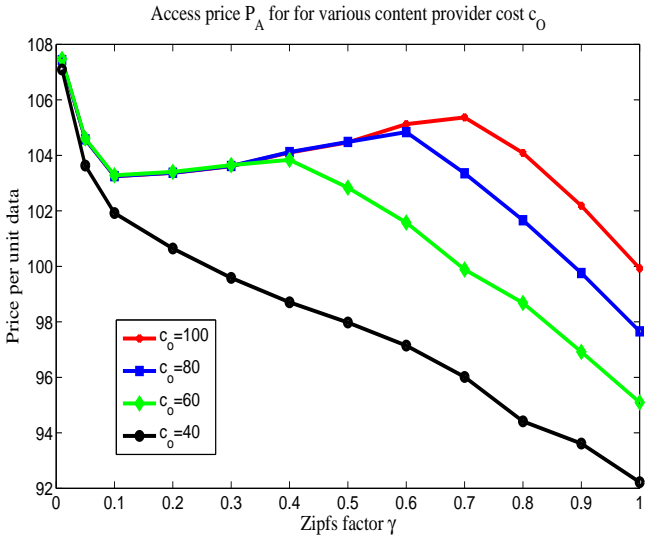}
  \caption{Access price $P_A$ vs Zipf’s factor $\gamma$ for various content provider cost $c_O$ and $R=0.7$.}
  \label{fig:access}
\end{figure}

The caching strategies $T{h^*}$ and $T{h_C}^*$ at the equilibrium are shown in Fig. \ref{fig:threshold7}, where Zipf’s factor $\gamma$ is varying in the interval of 0.01 to 1. In this scenario, the caching cost of transit ICN C is set to $70\%$ of the caching cost of access ICNs for every type of content $i$, i.e., $R=0.7$. The results are compared for $c_O=40,60,100$. As seen, for small amount of $\gamma$ the content are less skewed and the caching cost for them is very similar. Since the caching cost of transit ICN is less than access ICNs' and content provider's, it decides to cache most of the content. But as  $\gamma$ increases, some of the content is getting relatively more popular. In this case, the access ICNs prefer to cache the more popular content locally and smaller amount of content will be cached by transit ICN and content provider. For the higher $\gamma$, the caching cost of access ICNs and transit ICN is getting higher and they do not have an incentive to locally store them. Therefore, at this point, the content provider starts to serve more content than before and the transit ICN does not cache content and just forwards requests to the content provider. For higher content provider cost $c_O=100$, since the cost is so high the content provider just serves the less popular requested content but as the cost $c_O$ increases, it starts to serve more content. For the case which $c_O$ is relatively small ($c_O=40$), the transit ICN does not have the incentive to spend its resources for caching the requested content and just forwards all the requests for content from access ICNs to the content provider. The same scenario with $R=0.5$ is examined in Fig. \ref{fig:threshold5}. In this case, the transit ICN caching cost is half of the access ICN caching cost for each type of content. So, the transit ICN caches more requested content in its cache. 

Fig. \ref{fig:access} shows the access ICN price $P_A$ as function of $\gamma$ for different content provider costs $c_O$. $P_A$ decreases as the $\gamma$ gets higher when the content provider cost $c_O$ is relatively low. The reason is that as $\gamma$ increases, the caching costs of more popular content at access ICNs caches are getting lower. Therefore, the access ICNs need to decrease their price $P_A$ in order to compete with other access ICNs. But as the relative content provider cost $c_O$ increases, the access price is getting higher since both the access ICN and the transit ICN have a greater incentive to locally cache the content. However, the price for greater value of $c_O$ displays a bimodal behavior as a function of $\gamma$. According to Fig. \ref{fig:accessC2}, for relatively small value of $\gamma$, the different types of content have similar popularity and the access and transit ICN should incur more or less similar caching costs for each type of content. But as $\gamma$ increases, some of the content is getting more popular and the cost of caching them at the access ICN is decreasing. Therefore, the access ICN decreases its access price in order to induce increased demand from the users (see relationship between $P_A$ and demand in equation (\ref{eq:3})). On the other hand, when $\gamma$ keeps increasing, the transit price and the content provider storage price increase. Thus, the access ICN needs to slightly increase its price to compensate the increase in the transit price. After the slight increase, as $\gamma$ increases further, the access price $P_A$ decreases again. This is because, as seen from Fig. 4, the caching strategies $T{h^*}$ and $T{h_C}^*$ are the same for large $\gamma$. At this point, the access ICNs decide to cache the content that are more popular to get higher payoff. As a result, the access ICN decreases the price to further incentivize greater user demand for popular content.

The transit price $P_C$ and the content provider storage price $P_O^{\left( s \right)}$ are shown in Fig. \ref{fig:transit}. As $\gamma$ increases, the caching costs of less popular content that are going to be cached in transit ICN or content provider caches are increasing. Therefore, both the transit ICN and content provider should increase their prices as shown in the figure.

\begin{figure}
\hspace{-0.7cm}
  \includegraphics[width=10cm,height=7.3cm]{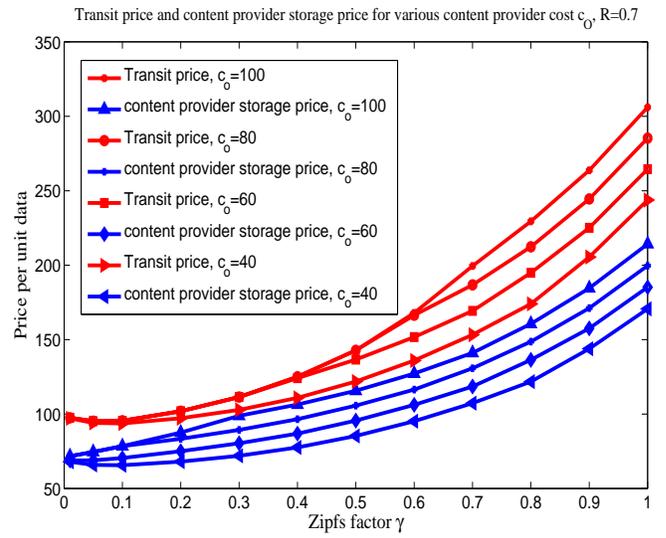}
  \caption{Transit price / content provider storage price vs Zipf’s factor $\gamma$ for various content provider cost $c_O$ and $R=0.7$.}
  \label{fig:transit}
\end{figure}



\section{Conclusion}

In this paper, we developed an analytical framework for distribution of popular content in an Information Centric Network (ICN) that comprises of Access ICNs, a Transit ICN and a Content Provider. By modeling the interaction of the above entities using game theory and under the assumption that the caching cost of the access ICNs and transit ICNs is inversely proportional to popularity, which follows a generalized Zipf distribution, we first showed that at the NE, the caching strategies turn out to be 0-1 (all or nothing). Further, for the case of symmetric Access ICNs, we showed that a unique \textit{NE} exists and the caching policy (0 or 1) is determined by a threshold on the popularity of the content (reflected by the Zipf probability metric), i.e., all content more popular than the threshold value is cached. Moreover, we showed that the resulting threshold indices and prices can be obtained by a decomposition of the joint caching and pricing problem into two independent caching only and pricing only problems. Finally, using numerical results we showed that as the Zipf's factor $\gamma$ increases and the relative popularity of the content gets skewed, the access ICN just caches the more popular content and the content provider serves only requests for less popular content while the transit ICN just forwards the demands to the content provider without locally caching any content itself. The insights obtained from the analysis here warrants further investigation into the case of asymmetric access ICNs. 
In this paper, we discussed a hierarchical scenario with $K$ access ICNs, one transit ICN and one content provider. In the special case that the transit ICNs are not "interconnected' to each other, the model in this paper is readily extendable for multiple transit ICNs. However, a generalized scenario with either multiple "interconnected" transit ICNs or multiple content providers is a direction for future work. Extension of the above model to scenarios where the content popularity may need to be learned is also of interest.


%

\appendices
\section{Proof of Proposition \ref{p2}}

Two sequences $f\left( {Th} \right)$ and $g\left( {Th_C} \right)$ defined in Proposition \ref{p2} are concave sequences. 

A sequence $S$ is strictly concave if the below inequality holds for every $n$.

\begin{center}
$S\left( {n + 1} \right) + S\left( {n - 1} \right) - 2S\left( n \right) < 0$
\end{center}

As stated before, ${q_M}\left( i \right) = \frac{1}{{\left( {\sum\limits_{j = 1}^M {\frac{1}{{{j^\gamma }}}} } \right){i^\gamma }}}$ for $ = 1,...,M$ and ${q_M}\left( i \right) = 0$ for $i>M$. Using (\ref{eq:30}) for $Th = 1,...,M - 1$, we have

\begin{center}
${P_C}\left( {Th} \right) = {c_0}{\left( {Th + 1} \right)^\gamma }\sum\limits_{i = 1}^M {{{\left( {\frac{1}{i}} \right)}^{^\gamma }}}  - \varepsilon$
\end{center}

Since $\varepsilon$ is so small and tends to zero, it can be considered as zero. Therefore, for $Th = 1,...,M - 1$, the sequence $f$ is defined as follows

\begin{center}
$\begin{array}{l}
f\left( {Th} \right) = \left[ {{c_0}{{\left( {Th + 1} \right)}^\gamma }\sum\limits_{i = 1}^M {{{\left( {\frac{1}{i}} \right)}^{^\gamma }}} } \right]\sum\limits_{i = Th + 1}^{M + 1} {{q_M}(i)}  + Th{c_{{C_0}}}\\
 \Rightarrow f\left( {Th} \right) = \left[ {{c_0}{{\left( {Th + 1} \right)}^\gamma }} \right]\sum\limits_{i = Th + 1}^{M + 1} {{{\left( {\frac{1}{i}} \right)}^{^\gamma }}}  + Th{c_{{C_0}}}
\end{array}$
\end{center}
To prove concavity, we need to have

\begin{center}
$\begin{array}{l}
f\left( {Th + 1} \right) + f\left( {Th - 1} \right) - 2f\left( {Th} \right) < 0 \Rightarrow \\
\left[ {{{\left( {Th + 2} \right)}^\gamma } - 2{{\left( {Th + 1} \right)}^\gamma } + T{h^\gamma }} \right]{\sum\limits_{i = Th + 2}^{M + 1} {\left( {\frac{1}{i}} \right)} ^\gamma }\\ < 1 - {\left( {\frac{{Th}}{{Th + 1}}} \right)^\gamma }.
\end{array}$
\end{center}
If this inequality holds for all $M$ and $Th = 1,...,M - 1$, then the sequence $f$ is concave. We introduce  ${\varphi _\gamma }\left( {Th} \right) = \left[ {{{\left( {Th + 2} \right)}^\gamma } - 2{{\left( {Th + 1} \right)}^\gamma } - T{h^\gamma }} \right]$. Two functions ${\varphi _0}\left( {Th} \right)$ and ${\varphi _1}\left( {Th} \right)$ are zero for all $Th$ that satisfy the above inequality. It means that for $\gamma=0$ and $\gamma=1$, $f$ is a concave sequence. Moreover, ${\varphi _\gamma }\left( {Th} \right)$ can be written as 
\begin{center}
$\begin{array}{l}
{\varphi _\gamma }\left( {Th} \right) = \left[ {{{\left( {Th + 2} \right)}^\gamma } - {{\left( {Th + 1} \right)}^\gamma }} \right] - \left[ {{{\left( {Th + 1} \right)}^\gamma } + T{h^\gamma }} \right]\\
 \Rightarrow {\varphi _\gamma }\left( {Th} \right) = {\left( {\frac{{Th}}{{Th + 1}}} \right)^\gamma } + {\left( {\frac{{Th + 2}}{{Th + 1}}} \right)^\gamma } - 2.
\end{array}$
\end{center}
Since ${\left( {\frac{{Th}}{{Th + 1}}} \right)^\gamma } + {\left( {\frac{{Th + 2}}{{Th + 1}}} \right)^\gamma } < 2$ for $0<\gamma<1$, ${\varphi _\gamma }\left( {Th} \right)$ is negative for all $Th = 1,...,M - 1$ and $M$ which cause that above inequality be satisfied. Therefore, $f$ is a concave sequence for all $M$ and all $0 \le \gamma  \le 1$ .

Using (\ref{eq:31}) for $Th_C = 1,...,M - 1$, we have

\begin{center}
$P_O^{\left( s \right)}\left( {T{h_C}} \right) = {c_{{C_0}}}{\left( {T{h_C} + 1} \right)^\gamma }\sum\limits_{i = 1}^M {{{\left( {\frac{1}{i}} \right)}^{^\gamma }}}  - \varepsilon$
\end{center}
Since $\varepsilon$ is so small and tends to zero, it can be considered as zero. Therefore, for $Th_C = 1,...,M - 1$, the sequence $g$ can be defined as

\begin{center}
$g = {c_{{C_0}}}{\left( {T{h_C} + 1} \right)^\gamma }\sum\limits_{i = T{h_C} + 1}^M {{{\left( {\frac{1}{i}} \right)}^{^\gamma }}}  - {c_O}\sum\limits_{i = T{h_C} + 1}^{M + 1} {{q_M}(i)}$ 
\end{center}
On the other hand, we can show that if two sequences are concave the sum of them is also concave. Assume that $g = \Psi  + \Delta$. If $\Psi$ and $\Delta$ are concave, we have

\begin{center}
$\Psi \left( {n + 1} \right) + \Psi \left( {n - 1} \right) - 2\Psi \left( n \right) < 0$
\end{center}
and
\begin{center}
$\Delta \left( {n + 1} \right) + \Delta \left( {n - 1} \right) - 2\Delta \left( n \right) < 0$.
\end{center}
Therefore, 
\begin{center}
$\begin{array}{l}
\Psi \left( {n + 1} \right) + \Delta \left( {n + 1} \right) + \Psi \left( {n - 1} \right) + \Delta \left( {n - 1} \right)\\
 - 2\left[ {\Psi \left( n \right) + \Delta \left( n \right)} \right] < 0
\end{array}$
\end{center}
That means 

\begin{center}
$g\left( {n + 1} \right) + g\left( {n - 1} \right) - 2g\left( n \right) < 0$
\end{center}

We already showed that the first term of sequence $g$ is concave, so we just need to show that the second term   $\Delta \left( {T{h_C}} \right) =  - {c_O}\sum\limits_{i = T{h_C} + 1}^{M + 1} {{q_M}(i)} $ is also concave. Then, we have to show that the following inequality holds for every $Th_C = 1,...,M - 1$

\begin{center}
$\Delta \left( {T{h_C} + 1} \right) + \Delta \left( {T{h_C} - 1} \right) - 2\Delta \left( {T{h_C}} \right) < 0$
\end{center}
By doing some algebra, we will get ${q_M}\left( {T{h_C}} \right) > {q_M}\left( {T{h_C} + 1} \right)$ that holds for $Th_C = 1,...,M - 1$, since ${q_M}\left( i \right)$ is a strictly decreasing function. That completes the proof. \hfill $ \square $


%
%
%

\ifCLASSOPTIONcaptionsoff
  \newpage
\fi

\begin{IEEEbiography}{Mohammad Hajimirsadeghi}
(S’15) received the B.S. degree
in electrical engineering (with honors, first rank)
from Shahed University, Tehran, Iran, in 2008 and the
M.S. degree in telecommunications from Sharif University
of Technology (SUT), Tehran, Iran, in 2011.
Currently, he is pursuing a Ph.D. degree in
electrical and computer engineering at WINLAB at Rutgers-The
State University of New Jersey, Piscataway, NJ under the guidance of Dr. Narayan Mandayam.
His research interests are in
the general area of communications and networking, including wireless resource allocation, Network economic and network security. Before coming to the United
States, he was a Communication Engineer at Iran
Telecommunication Research Center (ITRC), Tehran, Iran.
\end{IEEEbiography}

\begin{IEEEbiography}
{Narayan. B. Mandayam}
(S’89-M’94- SM’99-F’09) received the B.Tech (Hons.) degree in 1989 from the Indian Institute of Technology, Kharagpur, and the M.S. and Ph.D. degrees in 1991 and 1994 from Rice University, all in electrical engineering. Since 1994 he has been at Rutgers University where he is currently a Distinguished Professor and Chair of the Electrical and Computer Engineering department. He
also serves as Associate Director at WINLAB. He was a visiting faculty fellow in the Department of
Electrical Engineering, Princeton University, in 2002 and a visiting faculty at the Indian Institute of
Science, Bangalore, India in 2003. Using constructs from game theory, communications and networking,
his work has focused on system modeling, information processing as well as resource management for
enabling cognitive wireless technologies to support various applications. He has been working recently on
the use of prospect theory in understanding the psychophysics of data pricing for wireless networks as
well as the smart grid. His recent interests also include privacy in IoT as well as modeling and analysis of
trustworthy knowledge creation on the Internet.

Dr. Mandayam is a co-recipient of the 2015 IEEE Communications Society Advances in Communications
Award for his seminal work on power control and pricing, the 2014 IEEE Donald G. Fink Award for his
IEEE Proceedings paper titled “Frontiers of Wireless and Mobile Communications” and the 2009 Fred W.
Ellersick Prize from the IEEE Communications Society for his work on dynamic spectrum access models
and spectrum policy. He is also a recipient of the Peter D. Cherasia Faculty Scholar Award from Rutgers
University (2010), the National Science Foundation CAREER Award (1998) and the Institute Silver Medal
from the Indian Institute of Technology (1989). He is a coauthor of the books: Principles of Cognitive
Radio (Cambridge University Press, 2012) and Wireless Networks: Multiuser Detection in Cross-Layer
Design (Springer, 2004). He has served as an Editor for the journals IEEE Communication Letters and
IEEE Transactions on Wireless Communications. He has also served as a guest editor of the IEEE JSAC
Special Issues on Adaptive, Spectrum Agile and Cognitive Radio Networks (2007) and Game Theory in
Communication Systems (2008). He is a Fellow and Distinguished Lecturer of the IEEE.
\end{IEEEbiography}

\vspace{60 mm}


\begin{IEEEbiography}
{Alex Reznik}
[M’94, SM’10] is currently a Solution Architect on Hewlett Packard Enterprise’s telco strategic account team.   In this role he is involved in various aspects of helping a Tier 1 telco evolve towards a flexible infrastructure capable of delivering on the full promises of 5G. 
Prior to May 2016 Alex was a Senior Principal Engineer at InterDigital, leading the company’s research and development activities in the area wireless internet evolution.  Since joining InterDigital in 1999, he has been involved in a wide range of projects, including leadership of 3G modem ASIC architecture, design of advanced wireless security systems, coordination of standards strategy in the cognitive networks space, development of advanced IP mobility and heterogeneous access technologies and development of new content management techniques for the mobile edge. 
Alex earned his B.S.E.E. Summa Cum Laude from The Cooper Union, S.M. in Electrical Engineering and Computer Science from the Massachusetts Institute of Technology, and Ph.D. in Electrical Engineering from Princeton University. He held a visiting faculty appointment at WINLAB, Rutgers University, where he collaborated on research in cognitive radio, wireless security, and future mobile Internet.   He served as the Vice-Chair of the Services Working Group at the Small Cells Forum.  Alex is an inventor of over 130 granted U.S. patents, and has been awarded numerous awards for Innovation at InterDigital.
\end{IEEEbiography}

\end{document}